\documentclass[pdflatex,sn-mathphys-num]{sn-jnl}

\usepackage{graphicx}%
\usepackage{multirow}%
\usepackage{amsmath,amssymb,amsfonts}%
\usepackage{amsthm}%
\usepackage{mathrsfs}%
\usepackage[title]{appendix}%
\usepackage{xcolor}%
\usepackage{textcomp}%
\usepackage{manyfoot}%
\usepackage{booktabs}%
\usepackage{algorithm}%
\usepackage{algorithmicx}%
\usepackage{algpseudocode}%
\usepackage{listings}%

\newtheorem{definition}{Definition}
\newtheorem{theorem}[definition]{Theorem}
\newtheorem{lemma}[definition]{Lemma}%
\newtheorem{corollary}[definition]{Corollary}%
\newtheorem{remark}[definition]{Remark}%

\usepackage{cancel}
\usepackage{graphicx,subfigure,wrapfig}
\usepackage{tikz,color}
\usetikzlibrary{shapes,decorations,arrows,calc,arrows.meta,fit,positioning}
\tikzset{
    -Latex,auto,node distance =1.2 cm and 1.2 cm,
    state/.style ={circle, draw, minimum width = 0.3 cm},
    point/.style = {circle, draw,fill,node contents={}},
    bidirected/.style={Latex-Latex},
}
\newcommand{\one}{\mathbf{1}}

\begin{document}

\title[Quantum Advantage in Decision Trees: A Weighted Graph and $L_1$ Norm Approach]{Quantum Advantage in Decision Trees: A Weighted Graph and $L_1$ Norm Approach}


\author*[1,3]{\fnm{Sebastian Alberto} \sur{Grillo}}\email{sgrillo@uaa.edu.py}

\author[1,2]{\fnm{Bernardo Daniel} \sur{Dávalos}}\email{davalos@facen.una.py}
\equalcont{These authors contributed equally to this work.}

\author[3]{\fnm{Rodney Fabian} \sur{Franco Torres}}\email{rfrancot@pol.una.py}
\equalcont{These authors contributed equally to this work.}

\author[4]{\fnm{Franklin} \spfx{de} \sur{Lima Marquezino}}\email{franklin@cos.ufrj.br}
\equalcont{These authors contributed equally to this work.}

\author[2,3]{\fnm{Edgar} \sur{López Pezoa}}\email{epezoa@facen.una.py}
\equalcont{These authors contributed equally to this work.}

\affil*[1]{\orgdiv{Facultad de Ciencia y Tecnología}, \orgname{Universidad Autónoma de Asunción}, \orgaddress{\street{Jejuí 667, entre O’leary y 15 de agosto}, \postcode{1255}, \state{Asunción}, \country{Paraguay}}}

\affil[2]{\orgdiv{Facultad de Ciencias Exactas y Naturales}, \orgname{Universidad Nacional de Asunción}, \orgaddress{\street{Campus de la UNA}, \postcode{1039}, \state{San Lorenzo}, \country{Paraguay}}}

\affil[3]{\orgdiv{Facultad Politécnica}, \orgname{Universidad Nacional de Asunción}, \orgaddress{\street{Campus de la UNA}, \postcode{2111}, \state{San Lorenzo}, \country{Paraguay}}}

\affil[4]{\orgdiv{PESC-COPPE}, \orgname{Universidade Federal do Rio de Janeiro}, \orgaddress{\street{Avenida Horacio Macedo 2030}, \postcode{21941-598}, \state{Rio de Janeiro}, \country{Brazil}}}


\abstract{The analysis of the computational power of single-query quantum algorithms is important because they must extract maximal information from one oracle call, revealing fundamental limits of quantum advantage and enabling optimal, resource-efficient quantum computation.
This paper proposes a formulation of single-query quantum decision trees as weighted graphs.
This formulation has the advantage that it facilitates the analysis of the $L_1$ spectral norm of the algorithm output. This advantage is based on the fact that a high $L_1$ spectral norm of the output of a quantum decision tree is a necessary condition to outperform its classical counterpart. We propose heuristics for maximizing the $L_{1}$ spectral norm, show how to combine weighted graphs to generate sequences with strictly increasing norm, and present functions exhibiting exponential quantum advantage. Finally, we establish a necessary condition linking single-query quantum advantage to the asymptotic growth of measurement projector dimensions.}

\keywords{quantum algorithm, graph, spectral norm}



\maketitle

\section{Introduction}\label{sec1}

Graph theory plays a fundamental role in several areas of quantum computing, such as quantum error correction and fault tolerance, quantum cryptography and communication, and quantum complexity theory. In the context of quantum algorithm development, graph theory provides a rigorous mathematical framework for formulating and solving problems. In the context of quantum algorithm  design,  graph-theoretic structures provide a rigorous mathematical framework for formulating and analyzing problems in combinatorial optimization, search, and quantum simulation, as we discuss below. 

Quantum walks are a powerful framework for quantum algorithm design, particularly in the context of search problems~\cite{qw-book-renato}. In this model, the vertices of a graph correspond to states within a given search space, while the edges encode the structure of the problem. By leveraging the connectivity properties of the graph, quantum walk-based algorithms can potentially surpass the limitations of Grover’s algorithm~\cite{qw-on-graphs}. Quantum algorithms, particularly hybrid approaches like the Quantum Approximate Optimization Algorithm (QAOA) and the Quantum Walk Optimization Algorithm (QWOA), use these graph-based formulations to navigate solution spaces more efficiently than classical heuristics~\cite{optim-paper-2024}. A second important approach is Hamiltonian simulation, which relies on graph-theoretic methods to provide structured representations of quantum systems~\cite{hamiltonian-simulation-1996}. Many physical models, such as in condensed matter physics and quantum chemistry, can be represented by graphs where vertices correspond to quantum states or subsystems and edges represent interactions dictated by the system's Hamiltonian. Quantum algorithms designed for simulation, such as those based on Trotterization or variational methods, make use of graph representations to improve computational efficiency and accuracy. Graph structures enable the decomposition of complex Hamiltonians into local terms, helping the implementation on quantum hardware and reducing errors.

Quantum search or optimization algorithms can be analyzed in terms of the amount of information they  extract from the input,  abstracting away other  computational costs. The quantum decision tree, or quantum query model, is a useful framework for analyzing quantum algorithms where the complexity is measured by the number of queries to an oracle \cite{barnum2003quantum}. In that sense, a notable class of quantum decision trees is the one that performs only one query on the oracle, since they can be applied to search and optimization problems \cite{grover1997quantum, terhal1998single, grosshans2015factoring}. 
Developing techniques for single-query quantum algorithms constitutes a fundamental challenge in quantum computing, since all relevant information must be extracted from a single interaction with the oracle. This extreme restriction forces algorithm designers to exploit quantum superposition, interference, and phase encoding to their theoretical limits. At the same time, advances in this setting guide the design of optimal and resource-efficient quantum procedures. Insights obtained from this regime often extend to more general quantum algorithms and are especially relevant for near-term quantum devices with limited computational resources \cite{deutsch1992rapid,beals1998quantum}.
In this regard, single-query algorithms are able to solve problems where there is the greatest difference in complexity between classical and quantum decision trees \cite{aaronson2015forrelation}. 

Aaronson et al. \cite{aaronson2015polynomials} proved that a function can be approximated with bounded error by a second-degree multilinear polynomial if and only if it can be approximated with bounded error by the output of a single-query algorithm. This polynomial approximation encompasses the traditional notion of Nisan and Szegedy, and implies that by analyzing the maximum and minimum values of a second-degree multilinear polynomial on a subset of $\left\{ 1,-1 \right\}^n$, we will identify a Boolean function computable by a single-query quantum algorithm with bounded error. In this work, we are particularly interested in identifying functions that can be computed by a single-query quantum algorithm but are very costly for classical query algorithms. It is important to notice that any classical complexity increase when faced with a problem computed by a constant number of queries by quantum means implies an exponential gain.

A high $L_{1}$ spectral norm of the output is a necessary condition for a quantum decision tree to be expensive to simulate  in the classical query model \cite{grillo2019fourier}. Therefore, if a problem is expensive to solve by classical means but is efficiently solved with quantum decision trees, then that problem can only be solved with algorithms whose output has a high $L_{1}$ spectral norm. To this end, this paper proposes a representation of second-degree multilinear polynomials as Weighted Dynamical Graphs (WDG), where the computation of the $L_{1}$ spectral norm  appears directly as the sum of the weights. Therefore, we represent each monomial of the polynomial as an edge and each variable as a vertex of the WDG. However, like any weighted graph, there is also a symmetric matrix representation which we also present as an equivalent associated formulation. The WDG is associated with a function that is equal to the corresponding linear polynomial of second degree except for a constant. The WDG model acts as an alternative model for single-query quantum algorithms through the equivalence between single query quantum decision trees and linear polynomials of second degree, under bounded error \cite{aaronson2015polynomials}. However, an algorithm described according to the WDG model does not imply a direct representation according to the quantum query model.

From the WDG representation, we formulate two optimization problems equivalent to maximizing the $L_{1}$ spectral norm  by computing a partial function under bounded error. The first problem consists directly in maximizing the spectral norm $L_1$ of the WDG, where the difference between the maximum and the minimum of the function associated to the WDG does not exceed 1 as a constraint. However, it is complicated to perform an analysis of the WDG while maintaining the constraint. To address this, we formulate the second problem of minimizing the difference between the maximum and the minimum of the function associated with the WDG, but keeping the spectral norm $L_1$ constant. These optimization problems can be approached with classical optimization methods on matrices. However, we show by means of a lower bound that distributing the weights uniformly among the edges of each vertex is a heuristic that tends to maximize the $L_{1}$ spectral norm. In addition, we show that one or more single-query quantum decision trees can be combined to generate a sequence of single-query quantum decision trees whose $L_{1}$ spectral norm is strictly increasing. Thus, such sequences have the potential to solve problems where quantum algorithms have an exponential advantage over classical algorithms, and we present an example of a function identified in this way where single-query quantum algorithms possess such a superpolynomial advantage.

Beyond analytic properties of the output polynomial, the measurement process has been studied as an independent resource, with results on measurement compression and information extraction showing that reproducing the statistics of a given POVM may require nontrivial classical resources depending on the number of outcomes and the effective dimensionality of the associated projectors~\cite{winter2001compression, massar2000amount}. Moreover, even when the underlying quantum evolution is simple, estimating or simulating measurement statistics can be computationally hard, highlighting that the complexity of measurement alone may act as a bottleneck~\cite{gharibian2019complexity}. Motivated by these perspectives, we analyze the relationship between the computational advantage of single-query algorithms and the number of dimensions of the projectors that define the measurement stage. We identify a necessary condition for a single-query quantum algorithm with a finite number of outputs to have an advantage over classical algorithms. This condition is that there exists an output associated with a projector whose number of dimensions tends to infinity, as does its difference from the total number of dimensions that define the states of the algorithm. This result provides a concrete guide for the design of single-query quantum algorithms.

We summarize the main contributions of the paper in the following items:
\begin{enumerate}
    \item Formulation of single-query quantum decision trees as weighted dynamical graphs. 
    \item Formulation of the $L_{1}$ spectral norm maximization problem for single-query quantum decision trees as weighted graphs.
    \item A technique for combining weighted dynamical graphs, which if applied iteratively yields a sequence of weighted dynamical graphs computing functions with potential advantage over classical algorithms. In addition, an example of a function computed by iterative weighted dynamical graphs is presented, where quantum algorithms have an exponential advantage over classical decision trees.
    \item A necessary condition for a single-query quantum algorithm with a finite number of outputs to have an asymptotic advantage over randomized decision trees.
\end{enumerate}

 While prior work has established connections between polynomial approximations and quantum query complexity \cite{aaronson2015polynomials, grillo2019fourier}, the graph formulation proposed here offers a novel combinatorial perspective. However, it remains unclear how this approach compares to alternative frameworks such as sum-of-squares hierarchies or quantum communication complexity. This paper focuses exclusively on the weighted graph representation and its analytic implications for single-query quantum advantage.

This paper is organized as follows. In Section \ref{sec2} we present previous definitions and theorems from the literature that are applied. In Section \ref{sec3} we present the formulation of WDG for single-query quantum decision trees. In Section \ref{sec4} we formulate the problem of maximizing the $L_{1}$ spectral norm for WDG, how to obtain sequences of WDGs with increasing $L_{1}$ spectral norm and the example of a computed function that has an exponential advantage over classical algorithms. In Section \ref{sec5}, we present the necessary condition for a single-query algorithm with a finite number of outputs to have an asymptotic advantage over classical algorithms.
Finally, in Section \ref{sec6} we present the conclusions on the proposal.

\section{Preliminaries}\label{sec2}


 This section establishes notation and foundational results from quantum query complexity, polynomial approximation, and certificate complexity. We adhere to the standard quantum query model (QQM) \cite{barnum2003quantum} and adopt the Fourier-analytic framework for Boolean functions \cite{o2014analysis}.

The Quantum Query Model (QQM)~\cite{barnum2003quantum} describes algorithms that compute functions whose domain is a subset of $\{0,1\}^{n}$. The states of a QQM algorithm belong to a Hilbert space $\mathcal{H}$ with basis states $\lvert i \rangle \lvert j \rangle$, where $i \in [n]$, $j \in [m]$ for an arbitrary $m$ and we denote $[k]=\left\{ 0,1,..,k \right\}$. There is a query operator defined as
\[
O_{x} \lvert i \rangle \lvert j \rangle = (-1)^{x_{i}} \lvert i \rangle \lvert j \rangle,
\]
where $x \equiv x_0 x_1 \cdots x_n$ is the input, and $x_{0} \equiv 1$ is an ancilla bit. The final state of the algorithm on input $x$ is denoted as
\[
\lvert \Psi^{f}_{x} \rangle 
= U_{t} O_{x} U_{t-1} \cdots O_{x} U_{0} \lvert \Psi \rangle,
\]
where $\{ U_{i} \}$ is a set of unitary operators on $\mathcal{H}$ and $\lvert \Psi \rangle \in \mathcal{H}$. The number of queries or steps is the number of times that $O_{x}$ appears in the final state. 

\begin{definition}[Complete Set of Orthogonal Projectors]
Let $\mathcal{H}$ be a Hilbert space. A collection of projectors $\{P_i\}_{i=1}^n$ on $\mathcal{H}$ is called a \textit{complete set of orthogonal projectors} if it satisfies the following properties:
\begin{enumerate}
    \item $P_i^2 = P_i$ \hfill (idempotence)
    \item $P_i^\dagger = P_i$ \hfill (self-adjointness)
    \item $P_i P_j = 0$ for all $i \neq j$ \hfill (orthogonality)
    \item $\displaystyle \sum_{i=1}^n P_i = I$ \hfill (completeness)
\end{enumerate}
The projectors $\{P_i\}$ form a resolution of the identity and decompose the Hilbert space into mutually orthogonal subspaces.
\end{definition}

The probability of obtaining an output $z\in Z$ is
$\pi_{z}\left(x\right)=\left\Vert P_{z}\left|\Psi_{x}^{f}\right\rangle
\right\Vert ^{2}$, given a CSOP $\{P_i\}$. 
An algorithm computes a function $f:D\rightarrow Z$ within error $\varepsilon$ if
 $\pi_{f\left(x\right)}\left(x\right) \geq 1-\varepsilon$ for all input $\ensuremath{x}\in D\subset\left\{ 0,1\right\} ^{n}$.
 Theorem \ref{teo0} presents an alternative representation of the final state of a QQM algorithm \cite{grillo2018quantum}.
\begin{theorem}\label{teo0}
Given a QQM algorithm, we also introduce the following notation:
\begin{itemize}
\item A product of unitary operators
$\widetilde{U}_{n}=U_{n}U_{n-1}\ldots U_{0}$. 
    \item A CSOP $\left\{
  \bar{P}_{k}:0\leq k\leq n\right\}$, where the range of each $\bar{P}_{i}\left|i\right\rangle
\left|\psi\right\rangle=\left|i\right\rangle
\left|\psi\right\rangle$ and $\bar{P}_{i}\left|j\right\rangle
\left|\psi\right\rangle=0$ for $ i\neq j$ and any state $\left|\psi\right\rangle.$
    \item Given a fixed $j$, the CSOP $\left\{ \widetilde{P}_{j,k}:0\leq
  k\leq n\right\}$, where ${\widetilde{P}_{j,i}=\widetilde{U}_{j}^{\dagger}\bar{P}_{i}\widetilde{U}_{j}}$.

\end{itemize}

If the QQM algorithm applies $t$ queries over $x=x_0x_1...x_n$, then:

    \begin{equation}
  \label{desc}
  \widetilde{U}_{t-1}^{\dagger}O_{x}U_{t-1} \ldots U_{1}O_{x}U_{0}\left|\Psi\right\rangle =\sum_{k_{t-1}=0}^{n} \ldots \sum_{k_{0}=0}^{n} (-1)^{\sum_{i=0}^{t-1}x_{k_{i}}}\widetilde{P}_{t-1,k_{t-1}}\widetilde{P}_{t-1,k_{t-1}}...\widetilde{P}_{1,k_1}\widetilde{P}_{0,k_{0}}\left|\Psi\right\rangle.
\end{equation}
\end{theorem}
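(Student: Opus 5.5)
The plan is an induction on the number of queries $t$. The key observation is that each query becomes diagonal once we change basis by the accumulated unitary. Because $O_x$ acts only on the first register, we can write $O_x=\sum_{k=0}^{n}(-1)^{x_k}\bar{P}_k$, using completeness of the CSOP $\{\bar P_k\}$. Conjugating by $\widetilde U_j$ gives
\[
\widetilde{U}_{j}^{\dagger}O_{x}\widetilde{U}_{j}=\sum_{k=0}^{n}(-1)^{x_{k}}\widetilde{P}_{j,k}.
\]
This is again a resolution into the CSOP $\{\widetilde P_{j,k}\}_k$, because unitary conjugation preserves idempotence, self-adjointness, orthogonality and completeness. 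Everything below reduces to this single identity.

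Next, we rewrite the left-hand side of (\ref{desc}) by inserting $\widetilde U_j\widetilde U_j^\dagger=I$ between consecutive factors. The product $O_xU_{t-1}\cdots U_1O_xU_0$ regroups as $\widetilde U_{t-1}\,(\widetilde U_{t-1}^\dagger O_x\widetilde U_{t-1})(\widetilde U_{t-2}^\dagger O_x\widetilde U_{t-2})\cdots(\widetilde U_0^\dagger O_x\widetilde U_0)$. This uses $\widetilde U_{j}=U_j\widetilde U_{j-1}$, so that $\widetilde U_{j-1}\widetilde U_{j}^\dagger$ telescopes to $U_j^\dagger$ in the correct places. We check it by induction on $t$:
\[
O_xU_{t-1}\widetilde U_{t-2}=\widetilde U_{t-1}\bigl(\widetilde U_{t-1}^\dagger O_x\widetilde U_{t-1}\bigr).
\]
Multiplying on the left by $\widetilde U_{t-1}^\dagger$ cancels the leading unitary. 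We are left with the ordered product of the conjugated queries acting on $|\Psi\rangle$.

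Finally, we substitute the spectral form of each conjugated query and expand the product of $t$ sums by distributivity. The phases $(-1)^{x_{k_i}}$ are scalars, so they multiply to $(-1)^{\sum_i x_{k_i}}$. What remains is the ordered product $\widetilde P_{t-1,k_{t-1}}\cdots\widetilde P_{0,k_0}|\Psi\rangle$, summed over all $k_0,\dots,k_{t-1}\in\{0,\dots,n\}$. This is the claimed formula; the repeated factor $\widetilde P_{t-1,k_{t-1}}$ in the statement is harmless by idempotence.

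The main obstacle is the index bookkeeping in the telescoping step. We need the factor adjacent to the $j$-th query to be exactly $\widetilde U_j$, with the convention that $\widetilde U_j$ includes $U_j$. We also need to be sure the statement's $\widetilde U_{t-1}^\dagger$ is the correct leading correction. I would settle this first by checking $t=1$ and $t=2$ explicitly before writing the general induction.
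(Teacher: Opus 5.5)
Your proof is correct. Write $O_x=\sum_{k=0}^{n}(-1)^{x_k}\bar P_k$ and conjugate by $\widetilde U_j$ to get $\sum_k(-1)^{x_k}\widetilde P_{j,k}$. Then regroup the query sequence as the ordered product $(\widetilde U_{t-1}^\dagger O_x\widetilde U_{t-1})\cdots(\widetilde U_0^\dagger O_x\widetilde U_0)$ and expand by distributivity; this yields exactly (\ref{desc}). The duplicated $\widetilde P_{t-1,k_{t-1}}$ is harmless whether you read it as a typo or absorb it by idempotence, as you note.

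The paper gives no proof of this statement. It is quoted from \cite{grillo2018quantum}, so there is no in-paper argument to compare against, and your derivation is the standard direct one.

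There is one small bookkeeping slip, harmless to the result. In your regrouped product, the factor sitting between consecutive conjugated queries is $\widetilde U_j\widetilde U_{j-1}^\dagger=U_j\widetilde U_{j-1}\widetilde U_{j-1}^\dagger=U_j$. You wrote $\widetilde U_{j-1}\widetilde U_j^\dagger=U_j^\dagger$ instead, which is true but is the adjoint of the identity actually used. Your displayed inductive step $O_xU_{t-1}\widetilde U_{t-2}=\widetilde U_{t-1}\bigl(\widetilde U_{t-1}^\dagger O_x\widetilde U_{t-1}\bigr)$, with base case $O_xU_0=\widetilde U_0\bigl(\widetilde U_0^\dagger O_x\widetilde U_0\bigr)$, is correct and is what carries the argument. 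The $t=1$ and $t=2$ checks you propose go through as expected.
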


For each $b\in \left \{ 0,1 \right \}^{n}$, we define $\chi_{b}:\left\{ -1,1\right\} ^{n}\rightarrow\left\{ -1,1\right\}$ such that $\chi_{b}\left(x\right)=\underset{i}{\prod}b_{i}x_{i}$. This family of functions is an orthonormal basis for the space of functions of the form $f:\left\{ 1,-1\right\} ^{n}\rightarrow\mathbb{R}$, see \cite{o2014analysis}. Therefore, any function $f: \{0,1\}^n \to \mathbb{R}$ has a representation as a linear combination $$f = \sum_{b \in \{0,1\}^n} \alpha_b \chi_b,$$ and we denote the Fourier $1$-norm of $f$ as $$L(f) = \sum_{b \in \{0,1\}^n} |\alpha_b|.$$

In the same way that the output probability of an algorithm approximates a given function, Definition \ref{polaprox} extends this notion to polynomials.
\begin{definition}\label{polaprox}
 We say that a polynomial $p$ approximates a function $f$ with error bounded by $\varepsilon$, if $\left | p\left ( x \right ) -f\left ( x \right )\right |< \varepsilon $ for all $x$ in the domain of $f$.
\end{definition}
\begin{theorem}
\label{Th1}
There is a $\varepsilon<1/2$ and a ${\varepsilon}'<1/2$, such that a partial boolean function $f$ can be approximated by a degree-2 polynomial with error bounded by ${\varepsilon}<1/2$, if and only if,  $f$ is computable by a 1-query quantum algorithm with error bounded by ${\varepsilon}'<1/2$. \cite{aaronson2015polynomials}
\end{theorem}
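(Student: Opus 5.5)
The plan is to prove the two directions separately. The forward direction (algorithm $\Rightarrow$ polynomial) is routine. The reverse direction (polynomial $\Rightarrow$ algorithm) is the real content of Aaronson et al.'s result.

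\emph{Algorithm $\Rightarrow$ polynomial.} I would take a 1-query algorithm and apply Theorem \ref{teo0} with $t=1$. This expresses the final state as
\[
\lvert \Psi^{f}_x\rangle = U_1\sum_{k=0}^{n}(-1)^{x_k}\widetilde P_{0,k}\lvert\Psi\rangle ,
\]
so each amplitude is a linear form in the variables $(-1)^{x_k}$. The acceptance probability is
\[
\pi_1(x)=\sum_{k,l}(-1)^{x_k+x_l}\langle\Psi\rvert\widetilde P_{0,k}U_1^\dagger P_1U_1\widetilde P_{0,l}\lvert\Psi\rangle ,
\]
which is a real multilinear polynomial of degree at most $2$ in $\pm1$ variables. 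It is real because the expression is Hermitian, and it is multilinear because $(-1)^{2x_k}=1$. If the algorithm has error $\varepsilon'$, this polynomial approximates $f$ with error $\varepsilon=\varepsilon'$.

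\emph{Polynomial $\Rightarrow$ algorithm.}
\begin{enumerate}
\item Given $p$ of degree $2$ with $|p-f|\le\varepsilon$ on the domain, I would shift to the $\pm1$ output convention: set $q=1-2p$. Then $q$ lies in $[-1,-1+2\varepsilon]$ or $[1-2\varepsilon,1]$ on the domain, but it may be unbounded off the domain.
\item I would write $q(x)=c+x^{T}Ax$ with $x_0\equiv1$ absorbing the linear terms, where $A$ is a real symmetric $(n+1)\times(n+1)$ matrix.
\item The goal is to realize $q/K$, up to an affine rescaling, as a bias $\langle\phi_x\rvert M\lvert\phi_x\rangle$. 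Here $\lvert\phi_x\rangle=\frac{1}{\sqrt{n+1}}\sum_i x_i\lvert i\rangle$ is prepared with one query, and $M$ is a two-outcome observable with $\|M\|\le1$. Taking $M=A/\|A\|$ gives bias $x^{T}Ax/((n+1)\|A\|)$.
\item The difficulty is that this scaling shrinks the gap between the two values to something possibly $o(1)$. Repetition is not available, since we are limited to one query. The gap is therefore governed by $\max_{x\in\{\pm1\}^{n+1}}|x^TAx|$ versus $(n+1)\|A\|$.
\end{enumerate}

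The key step, and the main obstacle, is controlling the bound of the quadratic form on the whole cube, rather than only on the domain of $f$, by something achievable with a single query.

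My first attempt would be to invoke Grothendieck's inequality in its SDP form. Boundedness of $x^TAx$ on the cube by $O(1)$, after cleaning up, yields a factorization $A=D_1 B D_2$, with diagonal $D_1,D_2$ of unit-trace-normalized entries and $\|B\|\le K_G$. A query on the weighted superposition $\sum_i\sqrt{(D)_{ii}}\,x_i\lvert i\rangle$, followed by measuring an observable built from $B$, then yields a bias of $x^TAx/K_G$. This keeps a constant gap, and shifting by $c$ with a classical coin flip recovers bounded error $\varepsilon'<1/2$. Since $\varepsilon'$ depends only on $\varepsilon$ and $K_G$, this also explains why the theorem is stated with some constants $\varepsilon,\varepsilon'<1/2$ rather than arbitrary ones.

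Before this can work, one must first argue that $p$ can be replaced by a degree-2 polynomial bounded on all of $\{\pm1\}^{n}$ with the same sign behaviour. The available route is to clip via the constraint $|p|\le1$ on the cube, as Aaronson et al.\ do, or to assume boundedness in the definition of approximation. I expect this boundedness issue, together with the Grothendieck factorization step, to be where most care is needed. At that point I would cite \cite{aaronson2015polynomials} for the precise construction.
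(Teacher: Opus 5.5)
Your two-direction outline is the Grothendieck-based argument of Aaronson et al.~\cite{aaronson2015polynomials}, which is all the paper relies on, since it cites the theorem without proof. Your forward direction is fine. One harmless slip: the final state is $U_1U_0\sum_k(-1)^{x_k}\widetilde P_{0,k}\lvert\Psi\rangle$, and you dropped the $U_0$, which does not affect the degree count.

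\textbf{The gap: boundedness.} You propose to first \emph{argue} that $p$ can be replaced by a degree-2 polynomial bounded on the whole cube, for example by clipping. That cannot be done. In the cited work, $|p(x)|\le 1$ on the entire cube is a hypothesis built into the definition of approximation, not something derived. Without it the reverse direction is false:
\begin{itemize}
\item Take as domain the unit vectors $e_1,\dots,e_n\in\{0,1\}^n$, and let $f(e_i)=1$ iff $i\le n/2$.
\item The degree-1 polynomial $p(x)=\sum_{i\le n/2}x_i$ equals $f$ exactly on the domain, although it reaches $n/2$ at the all-ones input.
\item Yet the adversary method, as for unstructured search, shows that $f$ needs $\Omega(\sqrt n)$ quantum queries.
\item So for large $n$ no 1-query algorithm computes $f$ with any fixed error $\varepsilon'<1/2$. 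By the bounded theorem, $f$ then has no bounded degree-2 approximant at all, so no clipping procedure can exist.
\end{itemize}
The only valid option is your second one: put $p(x)\in[0,1]$ for all $x$ into the hypothesis. This also means that Theorem~\ref{Th1}, read with the paper's Definition~\ref{polaprox} (which omits that condition), is false as literally stated. It is correct with the bounded notion that Section~\ref{sec3} tacitly uses when it takes $p:\{1,-1\}^{n+1}\to[0,1]$. The forward direction is unaffected, since an acceptance probability lies in $[0,1]$ on every input.

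\textbf{The ``cleaning up'' step.} This must also be made explicit. Grothendieck's inequality in factorization form needs a bound on the bilinear form $\max_{x,y\in\{\pm1\}^{n+1}}|x^TAy|$, while boundedness of $q$ only controls $x^TAx$. The bridge is as follows.
\begin{itemize}
\item Since $A$ has zero diagonal, $z\mapsto z^TAz$ is affine in each coordinate, so its extrema over $[-1,1]^{n+1}$ are attained at vertices. It is also even, so fixing $x_0=1$ costs nothing.
\item Set $u=(x+y)/2$ and $v=(x-y)/2$, both in $[-1,1]^{n+1}$. Then $x^TAy=u^TAu-v^TAv$.
\item Hence $\max_{x,y}|x^TAy|\le 2\max_z|z^TAz|\le 4$. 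The last step uses $|c|\le 1$, which holds because $c$ is the average of $q$ over the cube.
\end{itemize}
Grothendieck then gives $A=D_1BD_2$ with $\|B\|\le 4K_G$, where $D_1=\mathrm{diag}(\sqrt{\mu_i})$ and $D_2=\mathrm{diag}(\sqrt{\nu_i})$ for probability vectors $\mu,\nu$. Query $\tfrac{1}{\sqrt2}\sum_i x_i\bigl(\sqrt{\mu_i}\lvert i\rangle\lvert 0\rangle+\sqrt{\nu_i}\lvert i\rangle\lvert 1\rangle\bigr)$ and measure the observable with off-diagonal blocks $B,B^T$, normalized by $4K_G$. This gives bias exactly $x^TAx/(4K_G)$, and your shift-and-threshold step then goes through.
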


Theorem \ref{Th1} implies that we can discover new problems that are decidable by 1-query quantum algorithms, by analyzing degree-2 polynomials.

A certificate is a string of bits that takes the shortest length of bits in the available string on the variable such that it verifies an algorithm. Formally is defined as follows \cite{aaronson2008quantum}:

\begin{definition}[Certificate complexity]\label{def:CC}
    Let $f:\{-1,1\}^n\to \{0,1\}$ be a partial function and $x\in \{-1,1\}^n$. Let us define a 0-certificate, denoted by $C^S_0(f)$, for $x$ as the subset $S\subseteq \{-1,1\}^n$ such that $f(x')=0$ for every $x'\in S$, where $x'=x_{|S}$. 
Similarly, a 1-certificate, denoted by $C^S_1(f)$, is the subset $S\subseteq \{-1,1\}^n$ such that $f(x')=1$ for every $x'\in S$.
    The \emph{certificate complexity} for $f$ is defined as 
    \begin{equation}\label{eq:CC}
        C(f) := \min_{k\in [1]} \max_{S \subseteq \{-1,1\}^n} C_k^S(f)
    \end{equation}
\end{definition}

In a randomized decision tree, the input location to be queried is chosen probabilistically \cite{arora2007}.

It is asserted that a randomized decision tree is said to compute $f$ with bounded error if its output is equal to $f(x)$ with probability $1$ and rejects with probability at least $1-\varepsilon$, if $f(x)\neq f(y)$ for all $y \in \{-1, 1\}^n$ \cite{buhrman2002complexity}. 
We denoted by $R_{\varepsilon}\left(f\right)$ for the \emph{randomized complexity with error at most less than $\epsilon$} \cite{buhrman2002complexity}, if it is the minimal number of queries that a classical algorithm applies in order to compute $f$ within error $\varepsilon$ by a randomized decision tree \cite{grillo2019fourier}.

Aaronson's results \cite{aaronson2008quantum} provide a compelling basis for proving the validity of the following equation
\begin{equation}\label{eq:relation_R_C}
    R_{\varepsilon}(f) = \Omega\left(\sqrt{C(f)}\right).
\end{equation}

\begin{theorem}
\label{TH2}
 Consider $D\subset \left\{ 0,1\right\}^{n}$ and a function $f:D\rightarrow\left\{ 0,1\right\}$
 that is $\varepsilon$-approximated by a polynomial $p:\mathbb{R}^n\rightarrow \mathbb{R}$. If $\deg\left(p\right)\leq 2t$, then
 \begin{equation}\label{1eq8}
\frac{R_{\varepsilon}\left(f\right)}{2t}\in O\left(L\left(p\right)^{2}\right),
\end{equation}
as $n\to \infty$.
\end{theorem}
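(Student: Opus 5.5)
We want to show that a function $\varepsilon$-approximated by a low-degree polynomial $p$ can be computed by a randomized decision tree with $O(t\,L(p)^2)$ queries. The plan is to build a classical estimator of $p(x)$ directly from the Fourier expansion, rather than going through certificate complexity.

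\textbf{Step 1: rewrite $p$ as an expectation.} Write $p=\sum_{b}\alpha_b\chi_b$ in the $\pm1$ encoding, so $L(p)=\sum_b|\alpha_b|$. Since $\deg(p)\le 2t$, every $\chi_b$ with $\alpha_b\neq 0$ depends on at most $2t$ variables. Define a probability distribution on the monomials by $\mu(b)=|\alpha_b|/L(p)$. Then
\[
p(x)=L(p)\,\mathbb{E}_{b\sim\mu}\bigl[\mathrm{sgn}(\alpha_b)\chi_b(x)\bigr].
\]
So $p(x)/L(p)$ is the mean of a $\{-1,1\}$-valued random variable.

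\textbf{Step 2: define the randomized tree.} Sample $b\sim\mu$, query the at most $2t$ bits in the support of $b$, and record $\mathrm{sgn}(\alpha_b)\chi_b(x)$. Repeat this independently $N$ times and let $\hat p$ be $L(p)$ times the empirical average.

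\textbf{Step 3: concentration.} By Hoeffding's inequality,
\[
\Pr\bigl[|\hat p-p(x)|\ge\delta\bigr]\le 2\exp\!\left(-N\delta^2/(2L(p)^2)\right).
\]
Fix $\delta=(1/2-\varepsilon)/2$, which is a positive constant. Then $N=O(L(p)^2)$ samples make the failure probability at most any desired constant $\varepsilon''$.

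\textbf{Step 4: decide $f(x)$.} Output $1$ if $\hat p\ge 1/2$ and $0$ otherwise. Since $|p(x)-f(x)|<\varepsilon$, whenever $|\hat p-p(x)|<\delta$ we have $|\hat p-f(x)|<\varepsilon+\delta<1/2$. The threshold rule therefore returns $f(x)$ with probability at least $1-\varepsilon''$. The total number of queries is $2tN=O(t\,L(p)^2)$, which is the claimed bound $R_{\varepsilon''}(f)/2t\in O(L(p)^2)$.

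\textbf{Main obstacle: the error parameter.} The statement uses the same $\varepsilon$ for the approximation error and for $R_\varepsilon$, whereas the construction yields error $\varepsilon''$. I would handle this in two ways:
\begin{itemize}
\item choose $\delta$ and $N$ so that $\varepsilon''\le\varepsilon$, which is possible because $\varepsilon<1/2$ is a fixed constant;
\item invoke standard amplification, which changes $R$ only by constant factors when $\varepsilon$ is a constant.
\end{itemize}
The asymptotic notation as $n\to\infty$ treats $\varepsilon$ as fixed, so the dependence of $N$ on $\varepsilon$ is absorbed into the constant. A degenerate case is when $L(p)$ is very small; there the $O(\cdot)$ absorbs an additive constant, since at least one sample is always taken.
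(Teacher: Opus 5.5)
The paper gives no proof of this statement: it is quoted in the preliminaries as a known result from \cite{grillo2019fourier}, so there is no in-paper argument to compare yours against. Your proof is correct and self-contained, and it is the standard argument for bounds of this kind:
\begin{enumerate}
\item Sample a character $\chi_b$ with probability $|\alpha_b|/L(p)$.
\item Spend at most $2t$ queries to evaluate it. Passing from $\{0,1\}$ to $\pm1$ variables and multilinearizing on the cube does not raise the degree, so $|b|\le 2t$ whenever $\alpha_b\neq 0$.
\item Apply Hoeffding with $N=O\bigl(L(p)^2\log(1/\varepsilon)/(1/2-\varepsilon)^2\bigr)$ samples. This is $O(L(p)^2)$ for fixed $0<\varepsilon<1/2$; note that $\varepsilon>0$ is automatic from the strict inequality in the approximation.
\end{enumerate}

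The only loose end is your degenerate case. Appealing to ``at least one sample'' does not by itself keep the ceiling in $N=\lceil cL(p)^2\rceil$ inside $O(L(p)^2)$; for that you need $L(p)$ bounded below. That lower bound does hold: either $f\equiv 0$ on $D$, so $R_\varepsilon(f)=0$ and there is nothing to prove, or some $x\in D$ has $p(x)>1-\varepsilon>1/2$. In the latter case $|p(x)|\le L(p)$ forces $L(p)>1/2$, so $\lceil cL(p)^2\rceil\le (c+4)L(p)^2$. With that observation your argument is complete.
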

Finally, Theorems \ref{Th1} and \ref{TH2} implies that we can find problems where 1-query quantum algorithms have an exponential advantage over classical algorithms by analyzing the partial boolean functions that can be approximated by degree-2 polynomials whose $L$ norm grows asymptotically on $n$.

\section{Polynomials of degree at most two as weighted graphs}\label{sec3}

In this section we formulate quantum algorithms as weighted graphs and also present their matrix representation. From here we maintain the ancilla bit $x_0=1$. The following theorems show how to analyze degree-2 polynomials as graphs.
\begin{theorem}
 If $p:\left\{ 1,-1\right\} ^{n + 1}\rightarrow\left[0,1\right]$ is a multilinear polynomial of degree at most two, then $p$ has a unique representation \begin{equation}\label{sum}
p=\underset{b\in \left \{ 0,1 \right \}^{n}}{\sum}\alpha_{b}\chi _{b},
\end{equation} such that $\left|b\right|\leq2$ and $\alpha_{b} \in \mathbb{R}$.
\end{theorem}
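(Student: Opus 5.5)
The plan is to use two facts in turn. First, the characters $\chi_b$ form an orthonormal basis of the real functions on the cube, which gives existence and uniqueness of the expansion. Second, a multilinear polynomial that is a function on $\{1,-1\}^{n+1}$ is determined by its values there, which gives the restriction $|b|\le 2$. Throughout, we index the variables $x_0,x_1,\dots,x_n$ with $x_0\equiv 1$. We read $\chi_b$ as the monomial $\prod_{i: b_i=1} x_i$, i.e.\ the product over the support of $b$; this is the intended meaning of $\prod_i b_i x_i$.

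\textbf{Existence.} Since $p$ has degree at most two and is multilinear, it can be written as
\[
p(x)=c+\sum_{i} c_i x_i+\sum_{i<j} c_{ij}x_ix_j .
\]
Each monomial $x_i$ is the character $\chi_{e_i}$, and each $x_ix_j$ with $i<j$ is $\chi_{e_i+e_j}$. The constant term is $\chi_0$. Since $x_0=1$, any monomial containing $x_0$ collapses to one of lower degree in $x_1,\dots,x_n$: for example $x_0x_j=x_j$. We absorb these into the corresponding coefficients of characters indexed by $b\in\{0,1\}^n$ (over the variables $x_1,\dots,x_n$). Collecting coefficients then gives $p=\sum_{|b|\le 2}\alpha_b\chi_b$ with real $\alpha_b$. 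Realness holds because $p$ is real-valued with range in $[0,1]$: each coefficient equals an average of real values, $\alpha_b=\mathbb{E}_x[p(x)\chi_b(x)]$.

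\textbf{Uniqueness.} Suppose $\sum_b\alpha_b\chi_b=\sum_b\beta_b\chi_b$ as functions on the cube. Taking the inner product $\langle f,g\rangle=2^{-n}\sum_x f(x)g(x)$ with $\chi_{b'}$ and using orthonormality $\langle\chi_b,\chi_{b'}\rangle=\delta_{bb'}$, cited from \cite{o2014analysis}, gives $\alpha_{b'}=\beta_{b'}$ for every $b'$. So the full Fourier expansion of $p$ is unique. Combined with existence, all coefficients with $|b|>2$ vanish.

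The only delicate point is notational: making precise the handling of the ancilla $x_0$ and the reading of $\chi_b$ as a product over the support of $b$. Once that is fixed, the argument is just orthonormality of the characters.
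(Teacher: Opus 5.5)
Your proof is correct and takes the same route as the paper: the paper's proof is a one-line appeal to the fact that the characters $\chi_b$ form an orthonormal basis (citing \cite{o2014analysis}), and you spell this out, with existence coming from the degree-$\le 2$ monomial decomposition and uniqueness from taking inner products. Your explicit convention $x_0\equiv 1$, which collapses $x_0x_j$ to $x_j$, also reconciles the paper's domain $\{1,-1\}^{n+1}$ with its index set $b\in\{0,1\}^n$, a mismatch the paper leaves implicit.
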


\begin{proof}
Follows from the definition of $\chi_{b}$, for more details see \cite{o2014analysis}.
\end{proof}

For example, take  $p:\left\{ 1,-1\right\} ^{6}\rightarrow\left[0,1\right]$ denoted as 

\begin{equation}\label{ej1}
    p\left( x \right) = \frac{1}{8}\left( x_0x_2+x_0x_5-x_{1}x_{2}-x_{4}x_{3}\right)+\frac{1}{2}.
\end{equation}

We can see that each monomial is a function of the form $\chi_{b}\left(x\right)=\underset{i}{\prod}b_{i}x_{i}$ for some $b\in \left \{ 0,1 \right \}^{n+1}$. Then, we can simplify every polynomial such that two monomials are not assigned the same function $\chi_{b}\left(x\right)$, which gives us a unique representation for each polynomial of the form given by Equation \ref{sum}.

The following definition is an alternative representation of degree-2 polynomials by weighted graphs, as Theorem \ref{t1} shows.
\begin{definition}
 Let $G=\left(V,E\right)$ be a graph with no multiple edges, where $V$ and $E$ are the sets of vertices and edges, respectively. Denote $w:E\rightarrow \mathbb{R}$ and $x\in\left \{ -1,1 \right \}^{n+1}$  (considering  the ancilla bit $x_{0}$  equal to 1), we say that $D=\left(G,w\right)$ is a weighted dynamical graph (WDG) and define \begin{equation}\label{D2}
     g_{D}\left(x\right)=\sum_{e\in E}s\left ( e,x \right )w\left ( e \right )
 \end{equation} as the value of $D$ over $x$, where if $e=\left\{ v_{i},v_{j} \right\}$ then $s\left ( e,x \right )=x_{i}x_{j}$ and if $e=\left\{ v_{i} \right\}$ then $s\left ( e,x \right )=x_{i}x_{0}$.
\end{definition}

This graph representation allows us to visualize the algebraic structure of the polynomial and facilitates the computation of its $L_{1}$ norm as the sum of the absolute values of the edge weights, which is central to our quantum advantage analysis.

We highlight that the graph has $n+1$ vertices, where the vertex of the ancilla bit remains constant.

\begin{theorem} 
\label{t1}
 There is a unique multi-linear polynomial  $p:\left\{ 1,-1\right\} ^{n + 1}\rightarrow\left[0,1\right]$ of degree at most two if and only if there is a WDG $D$ such that $p\left(x\right)=g_{D}\left(x\right)+K$ for some constant $K$.
\end{theorem}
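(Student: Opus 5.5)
The plan is to read the statement as a two-way correspondence. Every multilinear polynomial $p$ of degree at most two on $\{1,-1\}^{n+1}$, with $x_0=1$, is of the form $g_D+K$ for some WDG $D$. Conversely, every $g_D+K$ is such a polynomial. Moreover, $p$ determines the weights of $D$, and hence $D$ up to edges of weight zero, together with the constant $K$. The whole argument is a bookkeeping translation between monomials and edges, built on the uniqueness of the Fourier expansion in Equation \ref{sum}.

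\emph{Polynomial to graph.} By the preceding theorem, $p=\sum_{|b|\le 2}\alpha_b\chi_b$ with unique coefficients $\alpha_b$. I would split the sum into three parts: the constant term $b=0$, the degree-one terms $\alpha_{\{i\}}x_i$, and the degree-two terms $\alpha_{\{i,j\}}x_ix_j$.
\begin{itemize}
\item For each degree-two term with $i,j\ge 1$, create the edge $\{v_i,v_j\}$ with weight $\alpha_{\{i,j\}}$.
\item Since $x_0=1$, a degree-one term satisfies $x_i=x_ix_0$. So it can be encoded either as the loop $\{v_i\}$ or as the edge $\{v_0,v_i\}$, with weight $\alpha_{\{i\}}$.
\end{itemize}
The constant term $\alpha_0$ becomes $K$. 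Summing $s(e,x)w(e)$ over the edges then reproduces $p-K$ exactly, by Equation \ref{D2}.

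\emph{Graph to polynomial.} Conversely, given $D$, each summand $s(e,x)w(e)$ is either $w(e)x_ix_j$ or $w(e)x_i$. Hence $g_D+K$ is multilinear of degree at most two.

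\emph{Uniqueness.} For the uniqueness clause, suppose $g_D+K=g_{D'}+K'$ as functions on $\{-1,1\}^{n+1}$ with $x_0=1$. The uniqueness of the expansion in Equation \ref{sum}, i.e.\ orthonormality of the $\chi_b$, then forces $K=K'$ and equal weights monomial by monomial.

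\emph{Main obstacle.} The step I expect to need care is the ambiguity created by the ancilla. The functions $x_i$ and $x_0x_i$ coincide, and both the loop $\{v_i\}$ and the edge $\{v_0,v_i\}$ produce the same $s(e,x)$. Also, the constant may hide as a loop at $v_0$. I would fix a convention that degree-one terms are always represented by loops, and never by $\{v_0,v_i\}$ or a loop at $v_0$, and that all constants are absorbed into $K$. With that convention the correspondence is a genuine bijection. I would also remark that the paper's use of ``unique'' is best read as uniqueness up to this normalization and up to zero-weight edges. With the convention in place the remaining verifications are routine.
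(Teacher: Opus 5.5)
Your proposal is correct, and its existence half is the paper's dictionary: degree-two coefficients become weights of edges $\{v_i,v_j\}$, degree-one coefficients become loop weights $w(\{v_i\})$ via $x_i=x_ix_0$, and the constant coefficient becomes $K$.

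The difference is in the uniqueness clause. The paper fixes $D$ and shows only that two polynomials $p_1=g_D+K_1$ and $p_2=g_D+K_2$ agreeing at one point must have $K_1=K_2$. In other words, $D$ plus a single value determines $p$. You prove the reverse and stronger statement: $p$ determines $K$ and every edge weight, by orthonormality of the $\chi_b$ in the free variables $x_1,\dots,x_n$.

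Your version genuinely requires the normalization you adopt. This is because $s(\{v_i\},x)=s(\{v_0,v_i\},x)=x_i$, and a loop at $v_0$ contributes a constant. The paper itself mixes the two encodings: loops in the proof and in Definition \ref{mat}, but edges $\{v_0,v_2\}$ and $\{v_0,v_5\}$ in the worked example.

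Your convention is also what justifies identifying $L(g_D)$ with $\sum_{e\in E}|w(e)|$, as Definitions \ref{max} and \ref{min} do. Without it, a loop $\{v_i\}$ and an edge $\{v_0,v_i\}$ with weights $a$ and $-a$ add $2|a|$ to the weight sum but cancel in $g_D$.

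One point both arguments share: the codomain $[0,1]$ plays no role. Your converse direction rightly yields only a multilinear polynomial of degree at most two. It lands in $[0,1]$ only under extra conditions, namely $\delta(D)\le 1$ together with a suitable choice of $K$.
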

\begin{proof} Notice that
\begin{enumerate}
    \item For $|b|=2$ and $e=\left\{ i,j \right\}$,  if $b_{i}=b_{j}=1$ then $s\left ( e,x \right )=\chi _{b}\left ( x \right )$ and $\alpha _{b}=w\left ( e \right )$;
    \item For $|b|=1$ and $e=\left\{ i \right\}$, if $b_{i}=b_{j}=1$ then $s\left ( e,x \right )=\chi _{b}\left ( x \right )$ and $\alpha _{b}=w\left ( e \right )$; and
    \item For $|b|=0$ we have $K=\chi _{b}\left ( x \right )$.
\end{enumerate}

To prove the uniqueness, suppose that there are two polynomials $p_1, p_2:\{-1,1\}^{n+1} \rightarrow [0,1]$ and constants $K_1,K_2$ such that $p_1(x) = g_D(x) + K_1$ and $p_2(x) = g_D(x) + K_2$ are equals at $x'\in \{-1,1\}^{n+1}$,
i.e. $p_1(x')=p_2(x')$, then $g_D(x') + K_1 = g_D(x') + K_2 $ implies $K_1=K_2$. 
\end{proof}

 The equivalence established in Theorem \ref{t1} enables the translation of analytic pro\-perties of polynomials, such as the $L_{1}$ norm, into combinatorial properties of weighted graphs. This connection is central to the optimization problems formulated in Section \ref{sec4}.

For example, for the polynomial of Equation \ref{ej1} we take $G=\left(V,E\right)$ such
$$V=\left \{ v_0, v_{1}, v_{2}, v_{3}, v_{4}, v_{5} \right \},$$ 
$$E=\left \{\{v_0,v_2\}, \{v_0,v_5\}, \left \{ v_{1}, v_{2} \right \}, \left \{ v_{3}, v_{4} \right \} \right \}$$ and $w:E\rightarrow \mathbb{R}$ is denoted in Table \ref{T1}. Then taking the Equation \ref{D2}, we have that 
\begin{align*}
g\left(x\right)=\frac{1}{8}\left(x_0x_2+x_0x_5-x_{1}x_{2}-x_{4}x_{3}\right).    
\end{align*}
Theorem \ref{t1} gives an alternative representation of 2-grade polynomials as weighted graphs. Maintaining our original example, the polynomial from Equation \ref{ej1} has a weighted graph representation that we show in Figure \ref{G1}.

\begin{table}[h]
\centering
\begin{tabular}{c|c}
Edge & Value of $w$ \\ \hline
  $\{v_0,v_2\}$ & $\frac{1}{8}$     \\
  $\{v_0,v_5\}$ & $\frac{1}{8}$     \\
   $\left \{ v_{1}, v_{2} \right \} $  &     $-\frac{1}{8}$     \\
    $\left \{ v_{3}, v_{4} \right \}$  &     $-\frac{1}{8}$    
\end{tabular}
\caption{Weights for each edge from the WDG of Equation \ref{ej1}.\label{T1}}
\end{table}

\begin{figure}[h]

\centering
\begin{tikzpicture}
    \node[state] (v5) at (-1,-0.2) {$v_5$};
    \node[state] (v1) at (1.5,1) {$v_1$};
    \node[state] (v2) at (2.5,0) {$v_2$};
    \node[state] (v4) at (-0.5,-1.3) {$v_4$};
    \node[state] (v3) at (1.8,-0.8) {$v_3$};
    \node[state] (v0) at (-0.75,1.3) {$v_0$};

    \path[bidirected,blue] (v1) edge node{$-1/8$} (v2);
    \path[bidirected,blue] (v4) edge node{$-1/8$} (v3);
    \path[bidirected, blue] (v2) edge node{$1/8$} (v0);
    \path[bidirected, blue] (v5) edge node{$1/8$} (v0);
  
\end{tikzpicture}
\caption{Weighted graph for Equation \ref{ej1}. \label{G1}}
\end{figure}
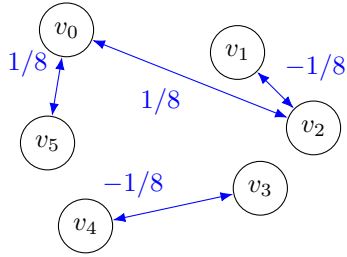

An important property of a WGD $D$ is the difference between the maximum and minimum of $g_D$, which we formalise below.

\begin{definition}\label{def:difmax}
Let $D=(G,w)$ be a WDG. We define the maximum difference of $D$, denoted by $\delta\left(D\right)$, as the value 
\begin{equation}
    \delta(D) := \max_{x\in\{-1,1\}^{n+1}}g_D(x) - \min_{x\in\{-1,1\}^{n+1}} g_D(x).
\end{equation}
\end{definition}
Taking the example in Figure \ref{G1}, we can see that $g_D$ reaches a maximum value of $\frac{1}{2}$ for $(1, -1, 1, 1, -1, 1)$ and a minimum value of $-\frac{1}{2}$ for $(1, -1, -1, 1, 1, -1)$, thus $\delta(D)=1$.

Finally, we note that a WGD $D$ has a matrix representation that allows us to calculate its corresponding function $g_D$ in a straightforward way.

\begin{definition}\label{mat}
    Let $D=(G,w)$ be a WDG with $n$ vertices. We define the matrix associated to $D$ as a  $(n+1)\times (n+1)$ real matrix $M^D = \left(M_{ij}^{D}\right)$, with $0\leq i, j \leq n$, such that:
    \begin{itemize}
        \item If $ij = 0$ and $k = \max\left\{i, j\right\} > 0$, then $M^D_{ij}=w\left( \left\{ v_k \right\} \right)$.
        \item If $i=j$ then $M^D_{ij}=0$.
        \item If $i,j>0$ and $i \neq j$, then $M^D_{ij}=M^D_{ji}=w\left( \left\{ v_i, v_j \right\} \right)$.
    \end{itemize}
    
\end{definition}

From Definition \ref{mat} we can observe that for $x \in \left\{ 1,-1 \right\}^{n + 1}$ we have $g_{D}\left( x \right) =\frac{1}{2} xM^{D}x^{t}$. For example, the WGD from Figure \ref{G1} implies that the matrix associated to $D$ is:

$$ M^{D}=\begin{bmatrix}
0 & 0 & 1/8 & 0 & 0 & 1/8 \\
0 & 0 & -1/8 & 0 & 0 & 0 \\
1/8 & -1/8 & 0 & 0 & 0 & 0 \\
0 & 0 & 0 & 0 & -1/8 & 0 \\
0 & 0 & 0 & -1/8 & 0 & 0 \\
1/8 & 0 & 0 & 0 & 0 & 0 \\
\end{bmatrix}.$$

In particular if $\mathbf{1}=(x_0, x_1,\dots,x_n)$ is such that $x_i=1$ for each $i=0, 1, \dots, n$, the total weight of $D$ is determined by $w(M^D)=g_D(\one)$.

\section{The $L_{1}$ spectral norm maximization problem.}\label{sec4}

From Theorems \ref{Th1} and \ref{TH2}, we can discover new problems where quantum algorithms have a potential advantage over classical algorithms. This would be done by searching for linear polynomials of the second degree or WGDs bounded between 0 and 1, which have a high $L_1$ norm. This search can be restricted by defining a partial function that should be computed. Thus the search is formulated as identifying which partial functions containing the first one can be computed if we maximize the $L_1$ norm. We define this  optimization problem below.
\begin{definition} \label{max}
    Let $f$ be a function such that $f:S\rightarrow \left [ 0,1 \right ]$ and $S\subset \left\{1,-1 \right\}^{n}$. Considering a WDG $D=\left ( G, w \right )$ with $G=\left ( V, E \right )$ and given some $\varepsilon>0$: \begin{itemize}
        \item Find $C$ and $w$ that maximize \begin{equation}
            \sum_{e\in E} \left|w\left ( e \right ) \right|.
        \end{equation}
        \item Subject to:
        \begin{enumerate}
            \item[i)] constraint $\left|g_{D}\left ( x \right )-f\left ( x \right )+C \right|< \varepsilon $ for each $x \in S$  and
            \item[ii)] constraint $\delta(D)=1.$
        \end{enumerate}
    \end{itemize}
\end{definition}

The optimization problem from Definition \ref{max} can be formulated as an equivalent minimization problem where we replace constraint ii), which simplifies the problem and is presented below.

\begin{definition}\label{min}
    Let $f$ be a function such that $f:S\rightarrow \left [ 0,1 \right ]$ and $S\subset \left\{1,-1 \right\}^{n}$. Considering a WDG $D=\left ( G, w \right )$ with $G=\left ( V, E \right )$ and given some $\varepsilon>0$: \begin{itemize}
        \item Find $C$ and $w$ that minimise $\delta(D).$
        \item Subject to:
        \begin{enumerate}
            \item[i)] constraint $\left|g_{D}\left ( x \right )/\delta(D)-f\left ( x \right )+C \right|< \varepsilon$ for each $x \in S$  and
            \item[ii)] constraint \begin{equation}
            \sum_{e\in E} \left|w\left ( e \right ) \right|=1.
        \end{equation}
        \end{enumerate}
    \end{itemize}
\end{definition}

These optimization problems are formally equivalent, but from an algorithmic standpoint they can be approached with different techniques: the first as a maximization problem with bounded difference constraint, the second as a minimization problem with fixed norm. In practice, the minimization formulation may be more numerically stable. The equivalence of the  optimization problems of Definitions \ref{max} and \ref{min} starts from the following observation. If we solve the minimization problem of $\delta(D)$ by finding $w$, then we solve the maximization problem by defining $w'=w/\left(\min  \delta(D) \right )$ as the weight function for the new WGD. We can also transform a maximizing solution to a minimizing solution by multiplying the weights $w$ by a factor. However, we are only interested in transforming the  minimization problem to the  maximization problem. Depending on which case is simpler to  analyze, we will present results for the maximization or  minimization problem based on the equivalence mentioned above.

It may be noted that any solution satisfying the constraints of Definitions \ref{max} or \ref{min} for $\varepsilon=0$, implies a single-query algorithm that computes the function $f$ under a bounded error $\varepsilon'$ according to Theorem \ref{Th1}. 

The problem of minimization of $\delta(D)$ can also be studied from the necessary properties that the weight function $w$ must fulfill to be bounded. As such, these properties serve as a guide to design single-query quantum algorithms that have potential advantage over classical ones.

We introduce the following notation before the next theorem. Given a graph $G = \left(V, E\right)$, we denote $E\left(v\right)$ as the set of edges underlying $v \in V$.

\begin{theorem}[Lower bound]\label{cot}
Let $G = \left(V, E\right)$ be a graph and $D = \left(G, w\right)$, if
\begin{equation}\label{epsilon}
    \max_{v\in V}\left\{ \sum_{e\in E\left ( v \right )}\left|w_{e}\right|\right\}=\epsilon
\end{equation}
then $\delta(D)\geq 2\epsilon$.
\end{theorem}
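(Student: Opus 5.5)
The plan is to exhibit, for a vertex $v$ attaining the maximum in Equation~(\ref{epsilon}), two inputs $x,y\in\{-1,1\}^{n+1}$ (both with $x_0=y_0=1$) such that $g_D(x)-g_D(y)=2\epsilon$. Then
\[
\delta(D)\geq g_D(x)-g_D(y)=2\epsilon
\]
follows at once from Definition~\ref{def:difmax}. The natural tool is to isolate the dependence of $g_D$ on the variable of $v$ using Equation~(\ref{D2}). There are two cases, because the ancilla coordinate $x_0$ is frozen at $1$ and cannot be flipped.

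\emph{Case 1: $v=v_k$ with $k\geq 1$.} Split the sum in Equation~(\ref{D2}) into the edges in $E(v_k)$ and the rest:
\[
g_D(x)=x_k L_k(x)+R(x),\qquad L_k(x)=\sum_{e=\{v_k,v_j\}\in E(v_k)} w_e x_j \;+\; w_{\{v_k\}}x_0 .
\]
Here $L_k$ and $R$ do not depend on $x_k$. We take $w_{\{v_k\}}=0$ if the singleton edge is absent.

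Fix the remaining coordinates as follows. Let $\sigma=\mathrm{sign}(w_{\{v_k\}})$, or $\sigma=1$ if that weight is $0$. For each neighbour $j\neq 0$, put $x_j=\sigma\,\mathrm{sign}(w_{\{v_k,v_j\}})$, and assign arbitrary values to all other coordinates. Then every term of $L_k$ equals $\sigma|w_e|$, because $x_0=1$ makes the singleton term equal to $w_{\{v_k\}}=\sigma|w_{\{v_k\}}|$. Hence $|L_k|=\sum_{e\in E(v_k)}|w_e|=\epsilon$.

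Now compare the two inputs that differ only in $x_k$, namely $x_k=\sigma$ and $x_k=-\sigma$. Their values of $g_D$ differ by exactly $2|L_k|=2\epsilon$.

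\emph{Case 2: $v=v_0$, the ancilla.} Here flipping is impossible, so I would instead use global negation of the non-ancilla coordinates. Let $x'=(1,-x_1,\dots,-x_n)$. Every term $x_ix_j$ with $i,j\geq 1$ is unchanged under this map. Every term that involves $x_0$, i.e.\ the edges in $E(v_0)$, changes sign. Therefore
\[
g_D(x)-g_D(x')=2\sum_{e\in E(v_0)}w_e\,x_{j(e)},
\]
where $j(e)$ is the non-ancilla endpoint of $e$. Choosing $x_{j(e)}=\mathrm{sign}(w_e)$ makes this difference equal to $2\epsilon$.

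The main subtlety is bookkeeping rather than depth. The paper's convention lets a singleton edge $\{v_i\}$ stand for $x_ix_0$, and one must check that the ancilla and its incident edges are handled correctly in both cases. In particular, the sign choice in Case 1 must absorb the fixed value $x_0=1$, which is why the factor $\sigma$ is introduced. A further check is that no edge is counted twice, for instance both $\{v_k\}$ and $\{v_0,v_k\}$. If both are present they represent the same monomial $x_0x_k$, and I would first merge them using the uniqueness in Theorem~\ref{t1}. That merge can only decrease $\epsilon$ as computed, so it is worth stating explicitly that we assume $D$ is in reduced form.
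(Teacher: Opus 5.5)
Your proof is correct and follows the same route as the paper. You pick a vertex attaining the maximum, take two inputs under which every edge at that vertex contributes $+|w_e|$ and $-|w_e|$ respectively, and use $\delta(D)\geq g_D(x^+)-g_D(x^-)$.

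Your version is more careful than the paper's in two ways. Choosing the inputs to differ only in $x_k$ (or, for the ancilla, by negating all non-ancilla coordinates) makes the contribution of $E\setminus E(v)$ exactly zero; the paper only asserts, without justification, that this contribution is $\geq 0$, and it ignores the fixed ancilla coordinate. Your reduced-form caveat is also genuinely necessary: with $w(\{v_1\})=1$ and $w(\{v_0,v_1\})=-1$ one gets $g_D\equiv 0$, while the edge-weight sum at $v_1$ is $2$.
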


\begin{proof}
    Let $v_0 \in V$ be such that $\sum_{e \in E(v_0)} |w_e| = \epsilon$ is the value satisfying Equation \ref{epsilon}, and $x^+, x^- \in\{-1,1\}^n$ are such that $|w_e| = w_e s(e,x^+) $ and $-|w_e| = w_e s(e,x^-)$ for $e\in E(v_0)$. Since $E(v) \subset E$, then $\sum_{e \in E(v)} |w_e| \le \sum_{e \in E} |w_e|$. Then,
    \begin{align*}
        \delta (D) & = \max_{x \in \{-1,1\}^n} \sum_{e\in E} w_e s(e,x) - \min_{x \in \{-1,1\}^n} \sum_{e \in E} w_e s(e,x)\\
        & \geq \sum_{e\in E} w_e s(e,x^+) - w_e s(e,x^-)\\
        & = 2\sum_{e \in E(v_0)} |w_e| + \underbrace{\sum_{e\in E \setminus E(v_0) } w_e s(e,x^+) -  w_e s(e,x^-)}_{\geq 0}\\
        & \geq 2\sum_{e\in E(v_0)} |w_e|\\
        & = 2\epsilon
    \end{align*}
\end{proof}

Theorem \ref{cot} implies that if the weights assigned to the edges tend to be constant, then $\delta(D)$ will tend to be smaller. Note that this theorem is stated for the minimization problem. In the case of the maximization problem, it equivalently implies that if the weights assigned to the edges tend to be constant, then the $L_1$ norm will tend to be larger.

Next, we will show how to combine two WDG to obtain another WDG with a higher $L_1$ norm, but first we introduce properties of the $L_1$ norm by using Lemma \ref{lem:normL1}. For two matrices with the same order $A$ and $B$,  the Hadamard product $A\odot B$ is defined by $(A\odot B)_{ij}=(A_{ij}\cdot B_{ij})$.
For a matrix $A$ and $a \in \mathbb{R}$ we denote the Hadamard power where $E=A^{\circ a}$ implies $E_{ij}:=A_{ij}^{a}$.

\begin{lemma}[$L_1$ norm properties]\label{lem:normL1}
Let $D=(G,w)$ a WDG and $M^D$ his matrix associated and $K$ a constant. Then
\begin{itemize}
    \item[(i)] $L(g_D)=w((M^D\odot M^D)^{\circ 1/2})$
    \item[(ii)] $L(g_D + K)=L(g_D) + |K|$
    \item[(iii)] $L(g_{D'\otimes D''}) = 2L(g_{D'})L(g_{D''})$ if $M^D=M'\otimes M''$
    \item[(iv)] $L(g_D\cdot g_{D'})=L(G_D)L(g_{D'})$ 
    \item[(v)] $L(K\cdot g_D)=|K|\cdot L(g_D)$
\end{itemize}
where $K$ is a constant.
\end{lemma}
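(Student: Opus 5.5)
The plan is to reduce all five items to the uniqueness of the Fourier expansion (the theorem preceding Theorem \ref{t1}) and to the identity $g_D(x)=\frac12 xM^Dx^t$. By Theorem \ref{t1}, the Fourier coefficients of $g_D$ are exactly the edge weights $w(e)$, one for each distinct character $\chi_b$ with $|b|\in\{1,2\}$. Hence
\[
L(g_D)=\sum_{e\in E}|w(e)|.
\]
Every item will be proved by computing the Fourier coefficients of the new function and summing their absolute values.

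For (i), the matrix $(M^D\odot M^D)^{\circ 1/2}$ has entries $|M^D_{ij}|$. It is symmetric, has zero diagonal, and each edge weight appears in exactly two positions $(i,j)$ and $(j,i)$. Since $w(A)=g_A(\one)=\frac12\one A\one^t$ is half the sum of all entries, we get $w\big((M^D\odot M^D)^{\circ 1/2}\big)=\sum_e|w(e)|=L(g_D)$.

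For (ii), the constant $K$ is the coefficient of $\chi_{0}$. This character is distinct from every edge character, so no cancellation occurs and the absolute values simply add.

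For (v), multiplying by $K$ scales every coefficient by $K$, so the norm scales by $|K|$.

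For (iii), I will interpret $D'\otimes D''$ as the WDG on the product index set whose matrix is $M'\otimes M''$. First I check that this matrix is admissible. It is symmetric because both factors are, and its diagonal entries $M'_{ii}M''_{kk}$ vanish, so Definition \ref{mat} applies. Next, apply (i):
\[
L(g_{D'\otimes D''})=\frac12\sum_{(i,k),(j,l)}|M'_{ij}||M''_{kl}| .
\]
This sum factors as $\frac12\big(\sum_{ij}|M'_{ij}|\big)\big(\sum_{kl}|M''_{kl}|\big)$, and each factor equals $2L(g_{D'})$ and $2L(g_{D''})$ respectively by (i). The result is $\frac12\cdot 4L(g_{D'})L(g_{D''})=2L(g_{D'})L(g_{D''})$. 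The one subtlety is how the ancilla row and column of the product are identified; I would fix this by declaring the index $(0,0)$ to be the ancilla and treating all other indices as ordinary vertices.

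Item (iv) is the expected obstacle, because it is false in general. Expanding gives $g_Dg_{D'}=\sum_{b,c}\alpha_b\beta_c\chi_{b\oplus c}$, so the triangle inequality only yields $L(g_Dg_{D'})\le L(g_D)L(g_{D'})$. Equality requires the map $(b,c)\mapsto b\oplus c$ to be injective on the supports, so that no two products collapse onto the same character and cancel. I would therefore prove (iv) under the additional hypothesis that $D$ and $D'$ live on disjoint sets of non-ancilla vertices, with care taken for edges incident to $v_0$, since $x_0=1$. Under that hypothesis every $b\oplus c$ determines $b$ and $c$ uniquely. The coefficients of the product are then exactly the products $\alpha_b\beta_c$, and $\sum|\alpha_b\beta_c|=L(g_D)L(g_{D'})$. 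I would also remark that $g_Dg_{D'}$ has degree up to $4$, so it is a polynomial but not itself a WDG.
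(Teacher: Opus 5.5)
Your proof is correct. For (i)--(iii) it is essentially the paper's argument. The paper also identifies $L(g_D)$ with $w$ of the entrywise absolute value of $M^D$. For (iii) it factors $\frac12(\one'\otimes\one'')(A\otimes B)(\one'\otimes\one'')^t$ as $2\cdot\frac12\one'A\one'^t\cdot\frac12\one''B\one''^t$, which is your double-sum factorization. Your check that $M'\otimes M''$ is symmetric with zero diagonal is what actually licenses applying (i) to the product, and the paper skips it.

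The routes diverge on (iv) and (v), and in both places yours is the sound one. For (iv) the paper assumes only that the edge sets $E,E'$ are disjoint and then writes $L(g_Dg_{D'})=\sum_{e,e'}|w(e)w(e')|$. This silently assumes that distinct edge pairs give distinct monomials. That hypothesis is too weak: $g_D=x_1x_2+x_3x_4$ and $g_{D'}=x_1x_3-x_2x_4$ have disjoint edge sets, yet $g_Dg_{D'}=0$. Your vertex-disjointness condition, under which $b$ is recovered as the part of $b\oplus c$ supported on the vertices of $D$, is the right one. It is also the setting in which the paper later uses (iv): in Theorem \ref{rec}, $f(x)$ and $f'(x')$ live on separate variable blocks.

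For (v) the paper tries to deduce it from (iv) by representing the constant $K$ with the matrix $\frac{2K}{n(n-1)}(\one\one^t-\mathbb{I}_n)$. That matrix encodes $\frac{2K}{n(n-1)}\sum_{i<j}x_ix_j$, which is not constant. Your direct scaling of the Fourier coefficients is the valid argument.

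The only cost of your approach is that (iv) becomes a conditional statement. This is unavoidable, because the unconditional version is false.
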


\begin{proof}
(i) Let $M_{abs}=(M^D\odot M^D)^{\circ 1/2}$ and $M^D=(w(v_{ij}))$ be
like Definition \ref{mat} then $(M_{abs})_{ij}=(w(v_{ij})\cdot w(v_{ij}))^{1/2}=|w(v_{ij})|$. Therefore, $w(M_{abs})=\sum_{i<j}|w(v_{ij})|=g_{D'}(\one)$ is the $L_1$ norm of $g_D$.

(ii) Since (i) we can rewrite  $g_{D'}(x)=g_D(x)+K$, then
\begin{align*}
L(g_D+K) &=\sum_{i<j}|w(v_{ij})|+|K|=L(g_D)+|K|
\end{align*}

(iii) By construction and definitions,
\begin{align*}
    L(g_{D'}\otimes g_{D''}) 
    & = w\left(((M^{D'}\otimes M^{D''})\odot (M^{D'}\otimes M^{D''}))^{\circ (1/2)}\right) \\
    & =w\left(((M^{D'}\odot M^{D'})\otimes (M^{D''}\odot M^{D''}))^{\circ (1/2)}\right)\\ 
    &= w\left((M^{D'}\odot M^{D'})^{\circ (1/2)}\otimes (M^{D''}\odot M^{D''})^{\circ (1/2)}\right) \\
    &= \cfrac{1}{2}(\one'\otimes \one'') \left((M^{D'}\odot M^{D'})^{\circ (1/2)}\otimes (M^{D''}\odot M^{D''})^{\circ (1/2)}\right) (\one'\otimes \one'')^t\\
    &=2\left(\cfrac{1}{2}\one'(M^{D'}\odot M^{D'})^{\circ (1/2)}\one'^t\right)\left(\cfrac{1}{2}\one''(M^{D''}\odot M^{D''})^{\circ (1/2)}\one''^t\right)\\
    &= 2w\left((M^{D'}\odot M^{D'})^{\circ (1/2)}\right)w\left((M^{D''}\odot M^{D''})^{\circ (1/2)}\right)\\
    &=2L(g_{D'})L(g_{D''})
\end{align*}
where the properties of Hadamard product and tensor product are available in \cite{STYAN1973217}.

(iv) For $G=(V,E)$ and $G'=(V',E')$ let's consider that $E$ and $E'$ are disjoint. If $M^D$ and $M^{D'}$ are the matrix representations of $D$ and $D'$ respec., then
\begin{align*}
    L(g_D \cdot g_{D'})
    & = \sum_{e\in E \wedge e'\in E'} |w(e)\cdot w(e')| \\
    & = \sum_{e\in E}\sum_{e\in E'} |w(e)|\cdot |w(e')| \\
    & = \left(\sum_{e\in E} |w(e)|\right)\left(\sum_{e\in E'} |w(e')|\right)\\
    &= L(g_D)\cdot L(g_{D'})
\end{align*}

(v) Particular case of (iv), if $g_D(x)=K$ is the constant function then we can have $M^D=(2K/n(n-1))(\one \cdot\one^t-\mathbb{I}_n)$ is his $n\times n$ matrix representation. So $L(g_D)=\cfrac{2|K|}{n(n-1)}w(M^D)=\cfrac{2|K|}{n(n-1)}\cfrac{n(n-1)}{2}=|K|$.
\end{proof}

Lemma \ref{lem:normL1} provides the algebraic tools to manipulate $L_{1}$ norms under graph operations. In particular, property (iii) shows that the Kronecker product of WDG matrices multiplicatively increases the $L_{1}$ norm, which is exploited in Theorem \ref{rec} to generate sequences with exponentially growing norms.
From here, we consider binary functions where a first class corresponds to the inputs where a WDG reaches its maximum and a second class corresponds to the inputs where the WDG reaches its minimum value. Let there be two WDG and their corresponding binary functions. We can construct a third WDG whose function has a domain $D$, which consists of the Kronecker products of the domains of the previous binary functions. For this new function, the first class consists of the Kronecker products of the inputs that reach the first classes for the previous functions, and the second class consists of the rest of $D$. This new Boolean function can be seen as an AND defined on the tensor Kronecker. Theorem \ref{rec} shows how to construct this WDG and what conditions it must satisfy for the $L_1$ norm to grow in relation to the previous WDG. Consider the following terms:
\begin{itemize}
    \item Let $D$ and $ D'$ be WGDs such that $f\left ( x \right )=\left (\cfrac{1}{2}xM^{D}x^{t}+K\right )$ and $f'\left ( x \right )=\left ( \cfrac{1}{2}xM^{D'}x^{t}+K'\right )$ for $K, K'\in \mathbb{R}$, $S\subset \left\{ 1,-1\right\}^{n}$ and $T\subset \left\{ 1,-1\right\}^{m}$, where $f:S\rightarrow \left [ 0,1 \right ]$ and $f':T\rightarrow \left [ 0,1 \right ]$ respectively.
    \item Denote $S^{+}=\left\{ x\in S:f\left ( x \right )=1\right\}$, $S^{-}=\left\{ x\in S:f\left ( x \right )=0\right\}$, $T^{+}=\left\{ x\in T:f'\left ( x \right )=1\right\}$ and $T^{-}=\left\{ x\in T:f'\left ( x \right )=0\right\}$. 
\end{itemize}
 \begin{theorem}\label{rec}
There is a WGD $D''$ such that 
\begin{itemize}
    \item[(i)] $L\left ( g_{D''} \right )=\left( L\left ( g_{D} \right )+ \left | K\right |\right)\left( L\left ( g_{D'} \right)+ \left | K'\right |\right)-\left |KK'\right |$; and 
    \item[(ii)] 
    $g_{D''}\left ( x \right )=f''\left ( x \right )+K''$ for $x\in ((S^{+} \otimes T^{+}) \cup (S^{-} \otimes T^{-})\cup (S^{+} \otimes T^{-})\cup (S^{-} \otimes T^{+}))$ and some $K''\in \mathbb{R}$, where the function $f''$ satisfies a) $f''(\omega)=1$ for $\omega\in (S^+\otimes T^+)$ and b)
    $f''(\omega)=0$ for $\omega\in ((S^{+} \otimes T^{-}) \cup (S^{-} \otimes T^{+}) \cup (S^{-} \otimes T^{-}))$.
\end{itemize}
\end{theorem}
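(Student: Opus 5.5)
The approach is to construct $D''$ from the product $f\cdot f'$ and read both claims off the resulting polynomial. Here $f$ and $f'$ are the degree-2 polynomials $\frac12 xM^Dx^t+K = g_D+K$ and $\frac12 yM^{D'}y^t+K' = g_{D'}+K'$. Identify the Kronecker product $x\otimes y\in\{1,-1\}^{(n+1)(m+1)}$ with a new set of variables $z_{(i,j)}=x_iy_j$. The ancilla is $z_{(0,0)}=x_0y_0=1$, and we have $z_{(i,0)}=x_i$ and $z_{(0,j)}=y_j$. I would use
\[
F(z)=f(x)f'(y)=\Bigl(\tfrac12 xM^Dx^t+K\Bigr)\Bigl(\tfrac12 yM^{D'}y^t+K'\Bigr)
\]
on inputs of the form $z=x\otimes y$.

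\textbf{Step 1 (degree 2 in $z$).} The key point is that $F$ is again of degree at most two in the variables $z$. Each term splits into one of the following cases.
\begin{itemize}
\item Products $x_ix_ky_jy_l$ are rewritten as $z_{(i,j)}z_{(k,l)}$. This is exactly the identity $(x\otimes y)(M^D\otimes M^{D'})(x\otimes y)^t=(xM^Dx^t)(yM^{D'}y^t)$.
\item Terms $K'x_ix_k$ become $K'z_{(i,0)}z_{(k,0)}$.
\item Terms $Ky_jy_l$ become $Kz_{(0,j)}z_{(0,l)}$.
\item $KK'$ is the constant.
\end{itemize}
So I define $D''$ as the WDG whose matrix is
\[
M^{D''}=\tfrac12 M^D\otimes M^{D'}+K'\,M^D\otimes E_{00}+K\,E_{00}\otimes M^{D'},
\]
after merging entries that correspond to the same unordered pair of $z$-vertices. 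Then $g_{D''}(x\otimes y)=f(x)f'(y)-KK'$, so we take $K''=KK'$. The edges coming from the three summands are disjoint: the first summand uses index pairs with both coordinates nonzero, since $M^D$ and $M^{D'}$ have zero diagonal and the ancilla rows only produce linear terms. Consequently no cancellation occurs between them.

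\textbf{Step 2 (part (ii)).} On $S^\pm\otimes T^\pm$ the values of $f$ and $f'$ lie in $\{0,1\}$, so $f f'$ is $1$ exactly on $S^+\otimes T^+$ and $0$ on the other three blocks. This gives $f''$ as stated, with $g_{D''}=f''-KK'$; the constant is absorbed into $K''$ up to sign convention.

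\textbf{Step 3 (part (i)).} Expand $L$. Lemma~\ref{lem:normL1}(iv), applied to the monomials on disjoint variable blocks, gives $L(g_Dg_{D'})=L(g_D)L(g_{D'})$. Part (v) gives $L(K'g_D)=|K'|L(g_D)$ and $L(Kg_{D'})=|K|L(g_{D'})$. Since the three groups of monomials are distinct characters $\chi_b$ in the $z$ variables, their norms add, yielding
\[
L(g_D)L(g_{D'})+|K'|L(g_D)+|K|L(g_{D'})=(L(g_D)+|K|)(L(g_{D'})+|K'|)-|KK'|.
\]

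\textbf{Main obstacle.} The delicate step is the bookkeeping showing that no two monomials collapse to the same $z$-character. One case is that the pair $\{(i,j),(k,l)\}$ and the pair $\{(i,l),(k,j)\}$ come from different Fourier coefficients of the product but are genuinely different edges in the $z$-graph. The other case is that the factor $\tfrac12$ from symmetric double counting in $M^D\otimes M^{D'}$ is correctly matched against the convention $g=\frac12 xMx^t$. I would handle this by treating the $z_{(i,j)}$ as formally independent vertices, so that no algebraic relations among them are used. Distinct index pairs are then distinct edges, and the $L_1$ norm is just the sum of the absolute weights.
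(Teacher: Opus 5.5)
Your proposal is correct and is essentially the paper's argument: the paper likewise realizes $f(x)f'(y)-KK'$ as a quadratic form in $x\otimes y$ through a Kronecker product, reads (ii) off the $\{0,1\}$-values of $f,f'$ with $K''=-KK'$ (note the sign; your Step~1 says $KK'$), and obtains (i) from Lemma~\ref{lem:normL1}(ii),(iv) via $L(g_{D''})+|KK'|=L(ff')=L(f)L(f')$. The only real difference is cosmetic: the paper's matrix $\tfrac12\bigl(M^D+\tfrac{2K}{n}I_n\bigr)\otimes\bigl(M^{D'}+\tfrac{2K'}{m}I_m\bigr)-\tfrac{2KK'}{mn}I_n\otimes I_m$ expands to $\tfrac12 M^D\otimes M^{D'}+\tfrac{K'}{m}M^D\otimes I_m+\tfrac{K}{n}I_n\otimes M^{D'}$, which spreads the cross terms uniformly over all copies, whereas you place them on the ancilla via $E_{00}$; this changes neither the values on $x\otimes y$ nor the total absolute weight. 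Also, your three supports are disjoint because the first summand needs $i\neq k$ and $j\neq l$, not because both coordinates are nonzero.
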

\begin{proof}
    Denote $M^{D''}= \tfrac{1}{2}\left (\left (M^{D} + 2K\cdot \cfrac{I_{n}}{n}\right)\otimes \left (M^{D'} + 2K'\cdot \cfrac{I_{m}}{m}\right)\right)-\left (2 KK'\cdot I_{n}\otimes \cfrac{I_{m}}{mn}\right)$ as the matrix associated to some WGD $D''$, where $\alpha=$. 
    Notice
    that, taking $x''=x\otimes x'\in \{1,-1\}^{nm}$, 
    \begin{align*}
  g_{D''}(x'') &= \tfrac{1}{2}(x\otimes x')M^{D''}(x\otimes x')^{t} \\
  &=\left(\tfrac{1}{2}xM^Dx^t+\tfrac{K}{n}xI_nx^t\right)
  \left(\tfrac{1}{2}x'M^{D'}x'^t+\tfrac{K'}{m}x'I_mx'^t\right) -
  \tfrac{KK'}{mn}(xI_nx^t)(x' I_m x'^t) \\
  &=(g_D+\tfrac{K}{n}\|x_D\|^2)(g_{D'}+\tfrac{K'}{m}\|x_{D'}\|^2) - 
  \tfrac{KK'}{mn}(\|x_D\|^2)
  (\|x_{D'}\|^2)\\
  &=(g_D+\tfrac{K}{n}n)(g_D+\tfrac{K'}{m}m)-\tfrac{KK'}{mn}nm\\
  &= f(x)f^{\prime}(x) - KK^{\prime} \\
\end{align*}
Then,
\begin{equation}\label{eq:f.f'}
    f''(x\otimes x')=f(x)f'(x')
\end{equation}
    (i) By Lemma \ref{lem:normL1} and for the above equation, we have
    \begin{align*}
    L(g_{D''})+|KK'| &=L(g_{D''}+KK') \\
            &=L(f\cdot f') \\ 
            &= L(f)L(f')    \\
            &=L(g_D+K)L(g_{D'}+K')\\
            &=(L(g_D) + |K|)(L(g_{D'}) + |K'|)
    \end{align*}

    This proves the statement.
    
    (ii) Since Equation \ref{eq:f.f'}, 
    \begin{itemize}
        \item[a)] If $x''\in\left (S^{+}\otimes T^{+}\right)$, then $f(x)=1$ and $f'(x')=1$, so $f''(x\otimes x')=1$ implies $g_{D''}(x'')=1-K''$.
        \item[b)] If $x''\in((S^{+} \otimes T^{-}) \cup (S^{-} \otimes T^{+}) \cup (S^{-} \otimes T^{-}))$ we have $f(x)=0$ or $f'(x')=0$, then $f''\left ( x\otimes x' \right )=0.$ 
    \end{itemize}
    Therefore, we prove the preposition by taking $K''=-KK'$.    
\end{proof}

\begin{remark}
    A WGD $D=(G,w)$ can have any width, however we want $g_D\in [-K,1-K]$, so that $f:S\to [0,1]$. 
    In order to guarantee that this always happens, we redefine $D_{\delta}=(G,w_{\delta})$ such that his matrix was specified as $M^{D_\delta}=\tfrac{1}{\delta(D)}M^D$.
\end{remark}

Let's consider an example to illustrate the Theorem \ref{rec}.
Let $D$ and $D'$ be two WGD depicted in Figure \ref{fig:DD'}.

\begin{figure}[h]
\centering
\subfigure[$D=(G,w)$]{
\begin{tikzpicture}
    \node[state] (v0) at (0,1.5) {$v_0$};
    \node[state] (v1) at (-1.5,0) {$v_1$};
    \node[state] (v2) at (1.5,0) {$v_2$};

    \path[bidirected,blue] (v1) edge node{$\tfrac{1}{3}$} (v0);
    \path[bidirected,blue] (v2) edge node{$-\tfrac{1}{6}$} (v0);
  
\end{tikzpicture}
}
\subfigure[$D'=(G',w')$]{
\begin{tikzpicture}
    \node[state] (v0) at (0,1.5) {$v_0'$};
    \node[state] (v1) at (-1.5,0) {$v_1'$};
    \node[state] (v2) at (1.5,0) {$v_2'$};

    \path[bidirected,blue] (v1) edge node{$\tfrac{1}{4}$} (v0);
    \path[bidirected,blue] (v0) edge node{$\tfrac{1}{6}$} (v2);
    \path[bidirected, blue] (v1) edge node{$-\tfrac{1}{4}$} (v2);
  
\end{tikzpicture}}
\caption{Two weighted graph.}
\label{fig:DD'}
\end{figure}
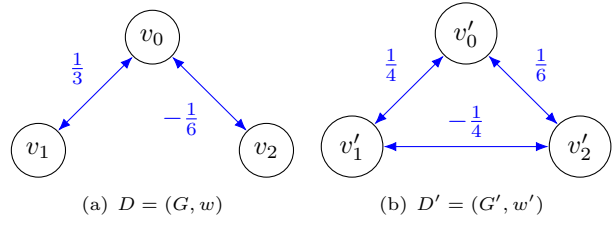

Their matrix representations are
$$M^D=\begin{bmatrix}
    0 & 1/3 & -1/6 \\
    1/3 & 0 & 0 \\
    -1/6 & 0 & 0
\end{bmatrix} \qquad , \qquad 
M^{D'}=\begin{bmatrix}
    0 & 1/4 & 1/6 \\
    1/4 & 0 & -1/4 \\
    1/6 & -1/4 & 0
\end{bmatrix}
$$
and their $L_{1}$ norms are $L(g_D)=\tfrac{1}{2}$ and $L(g_{D'})=\tfrac{2}{3}$. Then $f(x)=\tfrac{1}{3}x_1x_0-\tfrac{1}{6}x_2x_0+\tfrac{1}{2}$ and
$f'(x)=\tfrac{1}{4}x_1x_0+\tfrac{1}{6}x_2x_0-\tfrac{1}{4}x_1x_2+\tfrac{2}{3}$.

Moreover $(1,1,-1)\in S^+$ and $(1,1,1)\in T^+$.

By construction of the proof in Theorem \ref{rec}, let define $D''$ with
\begin{align*}
M^{D''} &= \frac{1}{2}\left(
\begin{bmatrix}
    0 & 1/3 & -1/6 \\
    1/3 & 0 & 0 \\
    -1/6 & 0 & 0
\end{bmatrix}+2\cdot\cfrac{1}{2}
\begin{bmatrix}
    1 & 0 & 0\\
    0 & 1 & 0\\ 
    0 & 0 & 1
\end{bmatrix}/3
\right)\otimes \left(
\begin{bmatrix}
    0 & 1/4 & 1/6 \\
    1/4 & 0 & -1/4 \\
    1/6 & -1/4 & 0
\end{bmatrix}+2\cdot\cfrac{2}{3}
\begin{bmatrix}
    1 & 0 & 0\\
    0 & 1 & 0\\ 
    0 & 0 & 1
\end{bmatrix}/3\right)\\
&-2\cdot\frac{1}{2}\cdot\frac{2}{3} \left(\begin{bmatrix}
    1 & 0 & 0 \\
    0 & 1 & 0 \\
    0 & 0 & 1
\end{bmatrix}\otimes \begin{bmatrix}
    1 & 0 & 0 \\
    0 & 1 & 0 \\
    0 & 0 & 1
\end{bmatrix}\right)/9\\
&=\left(
\begin{bmatrix}
    1/6 & 1/3 & -1/6 \\
    1/3 & 1/6 & 0 \\
    -1/6 & 0 & 1/6
\end{bmatrix}
\otimes
\begin{bmatrix}
    4/9 & 1/4 & 1/6 \\
    1/4 & 4/9 & -1/4 \\
    1/6 & -1/4 & 4/9
\end{bmatrix}\right) - \frac{2}{27}\mathbb{I}_9\\
&=\begin{bmatrix}
    0 & \tfrac{1}{24} & \tfrac{1}{36} & \tfrac{1}{27} & \tfrac{1}{24} & \tfrac{1}{36} & - \tfrac{1}{27} & -\tfrac{1}{24} & -\tfrac{1}{72}\\
    \tfrac{1}{24} & 0 & -\tfrac{1}{24} & \tfrac{1}{24} & \tfrac{1}{27} & -\tfrac{1}{24} & -\tfrac{1}{48} & -\tfrac{1}{27} & \tfrac{1}{48}\\
    \tfrac{1}{36} & -\tfrac{1}{24} & 0 & \tfrac{1}{36} & -\tfrac{1}{24} & \tfrac{2}{27} & -\tfrac{1}{72} & \tfrac{1}{48} & -\tfrac{1}{27}\\
    \tfrac{2}{27} & \tfrac{1}{24} & \tfrac{1}{36} & 0 & \tfrac{1}{24} & \tfrac{1}{36} & 0 & 0 & 0 \\
    \tfrac{1}{24} & \tfrac{2}{27} & -\tfrac{1}{24} & \tfrac{1}{24} & 0 & -\tfrac{1}{24} & 0 & 0 & 0\\
    \tfrac{1}{36} & -\tfrac{1}{24} & \tfrac{2}{27} & \tfrac{1}{36} & -\tfrac{1}{24} & 0 & 0 & 0 & 0\\
    -\tfrac{1}{27} & - \tfrac{1}{48} & -\tfrac{1}{72} & 0 & 0 & 0 & 0 & \tfrac{1}{24} & \tfrac{1}{36}\\
    -\tfrac{1}{48} & -\tfrac{1}{27} & \tfrac{1}{48} & 0 & 0 & 0 & \tfrac{1}{24} & 0 & -\tfrac{1}{24}\\
    -\tfrac{1}{72} & \tfrac{1}{48} & -\tfrac{1}{27} & 0 & 0 & 0 & \tfrac{1}{36} & -\tfrac{1}{24} & 0
\end{bmatrix}
\end{align*}
and his $L_1$ norm is $L(g_{D''})=1$.

Condition (i) in Theorem \ref{rec} is verified: 
$$(L(g_D)+\frac{1}{2})(L(g_{D'})+\frac{2}{3})-
\frac{1}{2} \cdot\frac{2}{3} = 1 \cdot \frac{4}{3}-\frac{1}{3}=1=L(g_{D''})$$

For condition (ii), we have $x=(1,1,-1)\otimes(1,1,-1)=(1,1,-1,1,1,-1-1,-1,1)$, then $f''(x)=f'(1,1,-1)f(1,1,-1)-\tfrac{1}{3}=1-\tfrac{1}{3}=\tfrac{2}{3}$, and $g_{D''}(x)=\tfrac{2}{3}$.
Therefore, $x\in (S^+\otimes T^+)$.

Similarly to how Theorem \ref{rec} defines an AND based on the Kronecker product, Theorem \ref{rec2} defines an OR based on the Kronecker product. First, we define the following terms:
\begin{itemize}
    \item $K_{R} = (1-K) I_{n}/n$,
    \item $M^{D}_{R} = K_{R} - \tfrac{1}{2}M^{D} $.
\end{itemize}

\begin{theorem}\label{rec2}
There is a WGD $D^{\prime \prime}$ such that 
\begin{itemize}
    \item[i)] $L(g_{D^{\prime \prime}}) = \vert 1-K^{\prime} \vert L(g_{D}) + \vert 1-K \vert L(g_{D^{\prime}}) + L(g_{D}) \cdot L(g_{D^{\prime}})$ and 
    \item[ii)] $g_{D^{\prime \prime}}(x) = f^{\prime \prime}(x) + K^{\prime \prime} $ for $x \in ((S^{+} \otimes T^{+}) \cup (S^{-} \otimes T^{-})\cup (S^{+} \otimes T^{-})\cup (S^{-} \otimes T^{+}))$ and some $K^{\prime \prime} \in \mathbb{R}$. Where $f^{\prime \prime} : \{1,-1\}^{nm} \rightarrow [ 0,1 ]$ satisfies
    \begin{enumerate}
        \item[(a)] $f^{\prime \prime}(\omega) = 1$ if $\omega \in ((S^{+} \otimes T^{-}) \cup (S^{-} \otimes T^{+}) \cup (S^{+} \otimes T^{+}))$, and
        \item[(b)] $f^{\prime \prime}(\omega) = 0$ if $ \omega \in (S^{-} \otimes T^{-})$
    \end{enumerate}
\end{itemize}
\end{theorem}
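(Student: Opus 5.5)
The plan mirrors the proof of Theorem \ref{rec}, using De Morgan's law: $\mathrm{OR}(a,b)=1-(1-a)(1-b)=a+b-ab$. Write $f=g_D+K$ and $f'=g_{D'}+K'$, and look for a WDG $D''$ whose value on $x''=x\otimes x'$ equals $1-(1-f(x))(1-f'(x'))$ up to an additive constant.

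\textbf{Complementary forms.} First I will encode $1-f$ and $1-f'$ as quadratic forms. Since every $x\in\{1,-1\}^n$ satisfies $xI_nx^t=n$, we have $xK_Rx^t=1-K$. With $M^D_R=K_R-\tfrac12 M^D$ this gives
\[
xM^D_Rx^t = 1-K-g_D(x) = 1-f(x),
\]
and the analogous identity holds for $M^{D'}_R$ with $1-f'$.

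\textbf{The candidate matrix.} I then propose
\[
M^{D''} = 2\left(\tfrac{K_0}{nm}I_n\otimes I_m - M^D_R\otimes M^{D'}_R\right),
\]
with the constant $K_0$ tuned to cancel the diagonal. The mixed-product property $(x\otimes x')(A\otimes B)(x\otimes x')^t=(xAx^t)(x'Bx'^t)$ gives
\[
\tfrac12 x''M^{D''}x''^t = K_0-(1-f(x))(1-f'(x')).
\]
Choosing $K_0=1$ makes this equal to $f''=1-(1-f)(1-f')$, so $g_{D''}=f''+K''$.

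The value of $K''$ comes from discarding the diagonal of $M^{D''}$. By Definition \ref{mat} a WDG matrix has zero diagonal, and a diagonal contributes a constant on $\{\pm1\}^{nm}$. So I redefine $M^{D''}$ with its diagonal removed and absorb that constant into $K''$.

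\textbf{Part (ii).} On $S^\pm\otimes T^\pm$, the values $f,f'\in\{0,1\}$ give $f''=1$ unless $f=f'=0$, where $f''=0$. This is exactly (a) and (b).

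\textbf{Part (i).} Expanding,
\[
f''=f+f'-ff' = (1-K')g_D+(1-K)g_{D'}-g_Dg_{D'}+c
\]
for a constant $c$. The $L_1$ norm of the non-constant part should then follow from Lemma \ref{lem:normL1}(iv),(v): the pieces $|1-K'|L(g_D)$, $|1-K|L(g_{D'})$ and $L(g_D)L(g_{D'})$.

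\textbf{Main obstacle.} Adding these three norms requires the three groups of Fourier monomials (edges of $D''$) to be disjoint. This is where the tensor structure enters. Edges of $g_D$ lift to positions $(i,i')\sim(j,i')$, summed over $i'$. Edges of $g_{D'}$ lift to $(i,i')\sim(i,j')$. Product edges have $i\neq j$ and $i'\neq j'$.

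The catch is that a lifted term like $w_{ij}\,x_ix_j\sum_{i'}x_{i'}^2$ is really the single monomial $x_ix_j$ in the original variables. In the lifted variables it spreads over $m$ distinct edges, each of weight $w_{ij}(1-K')/m$, so each group's total absolute weight is preserved. I will verify disjointness by index patterns: type A edges have the same second index, type B edges the same first index, and type C edges differ in both. Within each type the coefficients do not cancel.

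Finally, I must check that the ancilla convention (vertex $0$ of $x''$ is $x_0x'_0$) is consistent. This is handled exactly as the paper treats $x''=x\otimes x'$ in Theorem \ref{rec}.
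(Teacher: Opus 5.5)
Your proposal is correct and is essentially the paper's proof. Once you strip the diagonal, your matrix coincides with the paper's $M=2(K_R\otimes K'_R-M^D_R\otimes M^{D'}_R)$, and part (ii) is the same evaluation of the quadratic form on product inputs $x\otimes x'$. Note that $K_0=1$ does not cancel the diagonal. The choice that does is $K_0=(1-K)(1-K')$, i.e.\ the paper's $K_R\otimes K'_R$: it makes the diagonal vanish outright, so no redefinition is needed, and gives $K''=-(K+K'-KK')$.

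For part (i), the paper simply adds the three norms via Lemma \ref{lem:normL1}(iv),(v) in the original variables $(x,x')$. Your index-pattern check in the lifted variables is a more careful justification of that additivity. To finish it, record the type-C bookkeeping: each type-C entry is $-\tfrac12 M^D_{ij}M^{D'}_{i'j'}$, and each product monomial occupies the two lifted edges $\{(i,i'),(j,j')\}$ and $\{(i,j'),(j,i')\}$, so that group still totals $L(g_D)L(g_{D'})$.
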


\begin{proof}
    Denote $M = 2(K_{R} \otimes K^{\prime}_{R} - M^{D}_{R} \otimes M^{D^{\prime}}_{R})$ as the matrix asociated to some WGD $D^{\prime \prime}$ and notice that $\tfrac{1}{2}x''M{x''}^{t} = f(x) + f^{\prime}(x') -  f(x)f^{\prime}(x') - (K + K^{\prime} - KK^{\prime})$, thus:

    \begin{itemize}
        \item if $x \in ((S^{+} \otimes T^{-}) \cup (S^{-} \otimes T^{+}) \cup (S^{+} \otimes T^{+}))$ then $g^{\prime \prime}(x) = 1 - (K + K^{\prime} - KK^{\prime})$.
        \item if $x \in (S^{-} \otimes T^{-})$ then $g^{\prime \prime}(x) = - (K + K^{\prime} - KK^{\prime})$.
    \end{itemize}
    Therefore, we've proved the preposition by taking $K^{\prime \prime} = (K + K^{\prime} - KK^{\prime})$.

    To prove the norm, we begin by noting that 
\begin{align*}
    g_{D^{\prime \prime}} &= (g_{D} + K) + (g_{D^{\prime}} + K^{\prime}) - (g_{D}+K)(g_{D^{\prime}}+K') \\
    &= (1-K^{\prime})g_{D} + (1-K)g_{D^{\prime}} - g_{D}g_{D^{\prime}}, \\
\end{align*}

Applying the norm we obtain

\begin{align*}
    L(g_{D''}) &= L((1-K^{\prime})g_{D} + (1-K)g_{D^{\prime}} - g_{D}g_{D^{\prime}}) \\
    &= L((1-K^{\prime})g_{D}) + L((1-K)g_{D^{\prime}}) + L(-g_{D}g_{D^{\prime}})) \\
    &= |1-K'|L(g_D)+|1-K|L(g_{D'})+L(g_D)\cdot L(g_{D'})
\end{align*}
\end{proof}

By taking both WDG $D$ and $D'$ depicted in Figure \ref{fig:DD'}, and by construction in proof of Theorem \ref{rec2} the following matrix representation is obtained

\begin{align*}
M^{D''} &=\begin{bmatrix}
    0 & \tfrac{1}{24} & \tfrac{1}{36} & \tfrac{1}{27} & -\tfrac{1}{24} & \tfrac{1}{36} & - \tfrac{1}{54} & \tfrac{1}{48} & \tfrac{1}{72}\\
    \tfrac{1}{24} & 0 & -\tfrac{1}{24} & \tfrac{1}{24} & \tfrac{1}{27} & -\tfrac{1}{24} & -\tfrac{1}{48} & -\tfrac{1}{54} & \tfrac{1}{48}\\
    \tfrac{1}{36} & -\tfrac{1}{24} & 0 & \tfrac{1}{36} & -\tfrac{1}{24} & \tfrac{2}{27} & -\tfrac{1}{72} & \tfrac{1}{48} & -\tfrac{1}{54}\\
    \tfrac{2}{27} & \tfrac{1}{24} & \tfrac{1}{36} & 0 & \tfrac{1}{24} & \tfrac{1}{36} & 0 & 0 & 0 \\
    -\tfrac{1}{24} & \tfrac{1}{54} & -\tfrac{1}{24} & \tfrac{1}{24} & 0 & -\tfrac{1}{24} & 0 & 0 & 0\\
    \tfrac{1}{36} & -\tfrac{1}{24} & \tfrac{2}{27} & \tfrac{1}{36} & -\tfrac{1}{24} & 0 & 0 & 0 & 0\\
    -\tfrac{1}{54} & - \tfrac{1}{48} & -\tfrac{1}{72} & 0 & 0 & 0 & 0 & \tfrac{1}{24} & \tfrac{1}{36}\\
    \tfrac{1}{48} & -\tfrac{1}{54} & \tfrac{1}{48} & 0 & 0 & 0 & \tfrac{1}{24} & 0 & -\tfrac{1}{24}\\
    \tfrac{1}{72} & \tfrac{1}{48} & -\tfrac{1}{54} & 0 & 0 & 0 & \tfrac{1}{36} & -\tfrac{1}{24} & 0
\end{bmatrix}
\end{align*}

Then
\begin{align*}
L(g_{D''})=\tfrac{5}{6}=\tfrac{1}{2}\cdot\tfrac{2}{3}+\tfrac{1}{3}\tfrac{1}{2}+\tfrac{1}{2}\cdot \tfrac{2}{3}=|1-\tfrac{1}{2}|L(g_D')+|1-\tfrac{2}{3}|L(g_{D})+L(g_D)L(g_{D'}),
\end{align*}
and condition i) holds.

Moreover, if $x=x'=(1,1,-1)$, then $f(x)=g_D(x)+\tfrac{1}{2}=1$ and 
$f'(x')=g_{D'}(x')+\tfrac{2}{3}=1$, so $x\in S^+$ and $x'\in T^+$.
Taking $x''=x\otimes x'=(1,1,-1,1,1,-1,-1,-1,1)$ and $g_{D''}(x'')=\tfrac{1}{6}$,
being
\begin{align*}
f''(x'')=f(x)+f'(x)-f(x)f'(x')=1.
\end{align*}

Since $K''= -\left(K + K'- KK'\right) = -\left(\tfrac{1}{2} + \tfrac{2}{3} - \tfrac{1}{2}\cdot\tfrac{2}{3}\right) = -\tfrac{5}{6}$, according to the Theorem \ref{rec2} we have
\begin{align*}
g_{D''}(x'')=f''(x'')+K''=1 + (-\tfrac{5}{6}) = \tfrac{1}{6}
\end{align*}
and satisfies the condition ii).

\begin{corollary}
\label{cor1}
    Let $D$ be a WDG of size $n\times n$. If $L\left ( g_{D} \right )>1$, then there is a sequence $D_{i}$ of WDGs of size $n^{i}\times n^{i}$ such that $L\left ( g_{D_{i}} \right )=\Omega \left ( L\left ( g_{D} \right )^{i} \right )$.
\end{corollary}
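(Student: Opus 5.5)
The plan is to iterate the AND-type combination of Theorem \ref{rec} with itself, always feeding the construction the original $D$ as the second factor. This gives a recursion for the $L_1$ norm that grows geometrically. Concretely, set $D_1=D$ with constant $K_1=K$. Given $D_i$ of size $n^i\times n^i$ with associated function $f_i = g_{D_i}+K_i$, define $D_{i+1}$ as the WDG $D''$ produced by Theorem \ref{rec} from the pair $(D_i, D)$. Its matrix is built from the Kronecker product $(M^{D_i}+2K_i I/n^i)\otimes(M^{D}+2K I/n)$ minus a diagonal correction, so it has size $n^{i+1}\times n^{i+1}$. This settles the size claim by induction. We also take $K_{i+1}=K_iK$, in accordance with the choice $K''=-KK'$ (up to sign convention) in the proof of Theorem \ref{rec}, so that $f_{i+1}(x\otimes x')=f_i(x)f(x')$.

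For the norm, part (i) of Theorem \ref{rec} gives
\begin{equation*}
L(g_{D_{i+1}})=\bigl(L(g_{D_i})+|K_i|\bigr)\bigl(L(g_D)+|K|\bigr)-|K_iK|.
\end{equation*}
Writing $a_i=L(g_{D_i})+|K_i|=L(f_i)$ (by Lemma \ref{lem:normL1}(ii)) and $a=L(g_D)+|K|$, this reads $a_{i+1}-|K_{i+1}|=a_ia-|K_i||K|$. Since $|K_{i+1}|=|K_i||K|$, we get $a_{i+1}=a_ia$, hence $a_i=a^i$. Therefore
\begin{equation*}
L(g_{D_i})=\bigl(L(g_D)+|K|\bigr)^i-|K|^i\ \ge\ L(g_D)^i ,
\end{equation*}
where the last inequality follows from the binomial expansion, since all cross terms are nonnegative. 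This gives $L(g_{D_i})=\Omega(L(g_D)^i)$ with constant $1$. The hypothesis $L(g_D)>1$ ensures the sequence is genuinely (exponentially) growing, which is the point of the statement, though the inequality itself does not need it.

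The main obstacle is justifying that the iteration stays inside the setting of Theorem \ref{rec}. At each stage $D_i$ must again be a legitimate WDG: its matrix must have zero diagonal and be symmetric, and $f_i$ should take values in $[0,1]$ on its domain. I would first check that the diagonal terms $2K_iI/n^i$ combine with the subtracted $2K_iK\,I/n^{i+1}$ correction so that the diagonal of $M^{D_{i+1}}$ vanishes. If it does not vanish exactly, diagonal entries of a $\pm1$ quadratic form contribute only constants, so they can be absorbed into the constant $K_{i+1}$ without affecting the norm count. Range preservation is immediate, since $f_{i+1}=f_if$ is a product of $[0,1]$-valued functions on the product domain $S_i\otimes S$.

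If needed, I would also note that the norm identity relies on Lemma \ref{lem:normL1}(iv) applied to functions on disjoint variable blocks, which is exactly the tensor situation here. That justifies the multiplicativity $L(f_{i+1})=L(f_i)L(f)$ used above.
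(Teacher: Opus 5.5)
Your proposal is correct and matches the paper's proof: set $D_1=D$ and obtain $D_{i+1}$ by applying Theorem~\ref{rec} to the pair $(D_i,D)$, so the matrix sizes multiply to $n^{i+1}$. Your explicit recursion giving $L(g_{D_i})=(L(g_D)+|K|)^i-|K|^i\ge L(g_D)^i$ fills in what the paper's one-line induction leaves implicit, and your diagonal worry is unnecessary, since $\tfrac12\cdot\tfrac{2K_i}{n^i}\cdot\tfrac{2K}{n}=\tfrac{2K_iK}{n^{i+1}}$ is exactly the subtracted correction, so the diagonal vanishes.
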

\begin{proof}
    Take $D_{1}=D$ and using induction obtain $D_{i+1}=D''$ by applying Theorem \ref{rec} (or Theorem \ref{rec2}) with $D'=D$.
\end{proof}
Corollary \ref{cor1} shows that the iterative application of Theorems \ref{rec} or \ref{rec2} allows us to obtain a sequence of exponentially growing WGDs, with a spectral norm $L_1$ also growing exponentially. By Theorem \ref{TH2} can be observed that if $L\left ( g_{D_{i}} \right )$ is an increasing function relative to $n$ it is possible that the sequence of quantum algorithms has an exponential advantage over the classical ones since $t=1$. 
A possible strategy for developing quantum algorithms is to first solve the optimization problems of the Definitions \ref{max} or \ref{min} using generic optimization methods \cite{absil2008optimization}, in order to generalize the algorithm through a sequence of WGDs obtained by applying Theorems \ref{rec} or \ref{rec2}.

Below we show an example of a sequence of functions obtained by a sequence of WDG constructed with the iterative application of AND operations on the Kronecker product. We also show in Corollary \ref{cor2} that the certificate complexity (see Definition \ref{def:CC}) of these functions tends to infinity in relation to the size of the input and therefore their classical randomized complexity also tends to infinity. Therefore, these iterative procedures produce functions where quantum algorithms have an exponential advantage over classical ones, since WDG represent single-query quantum algorithms.

In the following, the definition of a particular WDG class, namely $ \mathcal G^n_F $, is presented. This class is characterized by all weighted dynamical graphs $g_D:\{-1,1\}^n\to\mathbb R$
that holds $g_D(x) = 2F(x) -1$ , where 
$F$ is defined from the recursive function $F_i:\{-1,1\}^*\to \{1,0\}$ given as follows: 
\begin{equation}\label{eq:F}
F\left(\bigotimes_{j=1}^i (1,x_j) \right) = \begin{cases}1 & \text{ if }i=1\\ F\left(\bigotimes_{j=1}^{i-1} (1,x_j) \right) \cdot\left(\cfrac{x_i+1}{2}\right) & \text{ if }i> 1\end{cases}
\end{equation}

\begin{remark}
Note that $F_i$ behaves similarly to the AND function, because $F_i$ returns 1 if and only if $x_i\neq -1$ for all $i\geq 1$. So, $g_D={g_D}_{|P_S^n}$ with the set of cartesian products $$P_S^n=\bigcup_{(i_1,\dots,i_n)\in \{+,-\}^n}(S^{i_1}\times \cdots \times S^{i_n})$$ and $S^{\pm}=\{x\in\{-1,1\}^n \mid F(x)=\pm 1\}$.
\end{remark}

\begin{lemma}\label{lem:complexity}
    For each $g_D \in \mathcal G_F^n$ then $g_D(x)=2F(x)-1$ with $F(x) = 2^{-n}\prod_j^n(x_i+1)$, where $x=\bigotimes_{j=1}^n(1,x_j)$. Furthermore, their certificate complexity is $C(F(x)) = \Omega(\log(|x|))$.
\end{lemma}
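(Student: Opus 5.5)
We prove the two claims of Lemma \ref{lem:complexity} in turn: first the closed form of $F$, by induction on the recursion (\ref{eq:F}), and then the certificate bound, by counting how many coordinates any certificate must fix.

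\textbf{Closed form.} Write the input as $x=\bigotimes_{j=1}^n(1,x_j)$. Unrolling the recursion gives
\[
F\Bigl(\bigotimes_{j=1}^{i}(1,x_j)\Bigr)=F\Bigl(\bigotimes_{j=1}^{i-1}(1,x_j)\Bigr)\cdot\frac{x_i+1}{2},
\]
which by induction equals $\prod_{j\le i}\frac{x_j+1}{2}=2^{-i}\prod_{j\le i}(x_j+1)$. The base case $i=1$ has to be read consistently with the convention $x_0\equiv 1$, i.e.\ as the factor $(x_1+1)/2$ with the ancilla contributing the neutral factor $1$. At $i=n$ this is $F(x)=2^{-n}\prod_{j=1}^n(x_j+1)$. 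Since every $g_D\in\mathcal G^n_F$ satisfies $g_D=2F-1$ by definition of the class, substitution gives the first claim.

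For consistency with the WDG framework, we also check that this class arises from Theorem \ref{rec}. Take the base WDG on $(1,x_j)$ with $f=(x_j+1)/2$, i.e.\ $K=1/2$ and a single edge $\{v_j\}$ of weight $1/2$. Applying Theorem \ref{rec} iteratively, equation (\ref{eq:f.f'}) shows that the $f''$ values multiply, giving exactly the product above on the Kronecker-structured domain.

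\textbf{Certificate bound.} $F$ is the AND of the $n$ bits $x_1,\dots,x_n$, viewed as a function of the tensored string $x$ of length $|x|=2^n$, so $n=\log_2|x|$.

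1. Take $x=\one$, where $F(x)=1$. Any $1$-certificate must fix enough coordinates of the tensor string to force $x_j=1$ for every $j$.
2. If some $x_j$ is left undetermined, flipping it to $-1$ gives an input in the domain with $F=0$ that is consistent with the certificate. So a $1$-certificate must determine all $n$ underlying bits.
3. Each coordinate of $\bigotimes_j(1,x_j)$ is a monomial $\prod_{j\in A}x_j$. Revealing one coordinate gives a single parity, that is, one linear equation over $\mathbb F_2$. Pinning down $n$ independent bits therefore requires at least $n$ revealed coordinates.

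Hence $C_1\ge n$. With Definition \ref{def:CC} read so that the $1$-side counts, this gives $C(F)=\Omega(n)=\Omega(\log|x|)$.

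\textbf{Main obstacle.} The hard step is 3 together with how it interacts with Definition \ref{def:CC}, which takes a minimum over $k$. The $0$-certificate side is cheap: one coordinate equal to $x_j$ with value $-1$ already certifies $F=0$. The bound must therefore rest on the $1$-inputs. I would first try the reading $\max_x$ over the certificate needed at $x$, which is the standard one. Under it, the rank argument of step 3 gives the $n$ lower bound cleanly, because a set of revealed parities determines all $x_j$ only if the corresponding characteristic vectors span $\mathbb F_2^n$, and that needs at least $n$ of them.
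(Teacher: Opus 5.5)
Your proof is correct and has the same skeleton as the paper's. You unroll the recursion to $F=2^{-n}\prod_j(x_j+1)$, observe that the all-ones input needs a $1$-certificate of size $n$, and convert using $|x|=2^n$. The difference is in how the key lower bound is justified.

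The paper argues from the decision tree for $\mathrm{AND}$ on the underlying bits (Figure~\ref{fig:tree}): one query suffices for a $0$-certificate and $n$ for a $1$-certificate. It then simply rewrites $n$ as $\log|x|$. As written, that only exhibits certificates of those sizes, and it implicitly treats the $x_j$ themselves as the queried coordinates. It never rules out a shorter certificate that reveals product coordinates $\prod_{j\in A}x_j$ of the tensor string.

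Your $\mathbb{F}_2$ argument closes exactly this gap. Revealed coordinates equal to $1$ cut out the annihilator of $\mathrm{span}\{e_A\}$, where $e_A$ is the characteristic vector of the revealed index set $A$. That annihilator is trivial only if the span is all of $\mathbb{F}_2^n$, so at least $n$ coordinates are needed. The singletons show $n$ suffice, so $C_1(F,\one)=n$ exactly. Your route costs one line of linear algebra and buys a genuine, tight lower bound for the encoding the lemma is actually about.

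Your handling of the two definitional wrinkles matches what the paper does implicitly. Its induction step does not follow Equation~(\ref{eq:F}) literally at $i=1$ either; it lets each factor $(1,x_j)$ contribute $(x_j+1)/2$. Its conclusion $\Omega(n)$ also relies on the $1$-side, i.e.\ $\max_x$, reading of Definition~\ref{def:CC}. With the literal $\min_k$, the one-coordinate $0$-certificates would give $C(F)=1$.

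One small wording fix concerns step 2. Flipping a single undetermined $x_j$ need not stay consistent with the certificate: a revealed coordinate $x_1x_2=1$ forces you to flip both bits. Say instead that some consistent input has $x_j=-1$; your step 3 formulation already makes this precise.
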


\begin{proof}
    The recursion property can be proven by induction over $n\in \mathbb N$, using the fact that
    \begin{align*}
        F((1,x_1)\otimes (1,x_2))= F(1,x_1) \left(\frac{x_1 + 1}{2}\right) F(1,x_2) \left(\frac{x_2 + 1}{2}\right) = 2^{-2}(x_1+1)(x_2+1)
    \end{align*}
    by definition in Equation \ref{eq:F}.
    For the certificate complexity, defined in Section \ref{sec2}, we use the property of $F$ and doing a search for the $F$ tree to get 0 or 1. Since the function $F$ returns 1 iff $x_i=1$ for each $i\geq 1$ (see Figure \ref{fig:tree}), it suffices to perform one query for 0-certificate and the $n$-queries for 1-certificate; therefore $C(F(x))=\Omega(n)$. 
    But by construction of the variable $x$, we can obtain the operating product on the tensorial product $x=\bigotimes_{i=1}^n(1,x_i)=\left(prod(S)\right)_{S\in 2^X}$ by taking $X=\{x_1,\dots,x_n\}$ the set of variables on $x$, where $prod(S)=\prod_{y\in S} y$ if $S\neq \emptyset$ and $prod(S)=1$ otherwise. Clearly $|x|=2^n$, then $n=\log (|x|)$ -- this is the number of query for the certificate complexity. As the complexity can be queried in less time, then $C(F)\in \Omega(\log(|x|))$.
\end{proof}

\begin{figure}[ht]
\centering
\begin{tikzpicture}
  \node[circle,draw] (a) at (0,0) {$x_1$};
  \node (b) at (-1,-1) {\reflectbox{$\ddots$}};
  \node[circle,draw] (c) at (-2,-2) {$x_n$};
  \node[rectangle,draw] (d0) at (1,-1) {$0$};
  \node[rectangle,draw] (l1) at (-3,-3) {$1$};
  \node[rectangle,draw] (l0) at (-1,-3) {$0$};
  
  \draw (a) -- (b) node[above left] {$1$};
  \draw (b) -- (c) ;  
  \draw (a) -- (d0) node[pos=0.5] {$-1$};  
  \draw (c) -- (l1) ;
  \draw (c) -- (l0) node[pos=0.5] {$-1$};
\end{tikzpicture}
\caption{Scheme of the proof for certificate complexity.}
\label{fig:tree}
\end{figure}
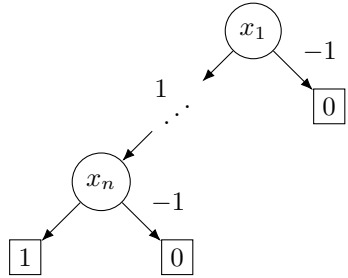

\begin{corollary}\label{cor2}  
For each $\varepsilon<1/2$, then
$R_{\varepsilon}(F) =\Omega\left(\sqrt{\log(|x|)}\right)$.
\end{corollary}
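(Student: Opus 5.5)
The corollary should follow by chaining Lemma \ref{lem:complexity} with the relation between randomized query complexity and certificate complexity recorded in Equation \ref{eq:relation_R_C}.

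\textbf{Step 1.} Fix $\varepsilon<1/2$. Equation \ref{eq:relation_R_C} is the consequence of Aaronson's results \cite{aaronson2008quantum} stated in Section \ref{sec2}. It gives $R_{\varepsilon}(f)=\Omega\left(\sqrt{C(f)}\right)$ for any partial Boolean function $f$. We apply it with $f=F$, where $F$ is viewed as a partial function on the domain $P_S^n$ of tensor-structured inputs $x=\bigotimes_{j=1}^n(1,x_j)$. Its input length is $|x|=2^n$.

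\textbf{Step 2.} Lemma \ref{lem:complexity} gives $C(F)=\Omega(\log(|x|))$. Since $t\mapsto\sqrt{t}$ is monotone, this becomes $\sqrt{C(F)}=\Omega\left(\sqrt{\log(|x|)}\right)$. Combining with Step 1 by transitivity of $\Omega$ gives $R_{\varepsilon}(F)=\Omega\left(\sqrt{\log(|x|)}\right)$ as $n\to\infty$.

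\textbf{Step 3.} The hidden constant depends only on $\varepsilon$, provided $\varepsilon<1/2$ is fixed. If needed, error reduction by a constant number of repetitions of the randomized tree would make the bound uniform in the error parameter.

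\textbf{Main obstacle.} The delicate point is checking that Equation \ref{eq:relation_R_C} applies to a \emph{partial} function. It must be applied to the same notion of certificate complexity that Lemma \ref{lem:complexity} bounds, namely the maximum over inputs in the promise domain $P_S^n$. Here a $1$-input must have all $n$ variables $x_j$ fixed to $1$. Because the coordinates of $x$ are products of the $x_j$, certifying this requires reading $\Omega(n)$ coordinates of $x$, which is the $\Omega(\log|x|)$ bound from the lemma.

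I would verify this directly rather than rely only on the lemma. Use the $n$ promise inputs that differ from the all-ones input only in the single variable $x_j$. Any $1$-certificate for the all-ones input must contain, for each $j$, a coordinate indexed by a subset containing $x_j$. Each coordinate is indexed by a subset of the $x_j$, and the relevant singleton coordinates are distinct, so this gives at least $n$ queries.

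With this certificate bound confirmed, the composition in Steps 1--2 completes the proof.
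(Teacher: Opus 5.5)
You follow exactly the paper's route; its proof is the single line ``immediate by Equation \ref{eq:relation_R_C} and Lemma \ref{lem:complexity}''. The point you flag as the main obstacle is where the argument genuinely fails, but on the $R$ side, not the $C$ side. The bound $R_{\varepsilon}(f)=\Omega(\sqrt{C(f)})$ is a fact about \emph{total} functions. It comes from $R_{\varepsilon}(f)=\Omega(bs(f))$ together with $C(f)\le bs(f)^2$, and the second inequality uses totality. For promise problems it is false, and this $F$ is a counterexample, so the statement cannot be proved by any route.

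Here is a direct algorithm. On the promise domain, the coordinate of $x=\bigotimes_{j=1}^n(1,x_j)$ indexed by $S\subseteq\{1,\dots,n\}$ equals $\prod_{j\in S}x_j$. Query two independent, uniformly random coordinates and output $1$ iff both equal $1$. If all $x_j=1$, this never errs. If $T=\{j:x_j=-1\}\neq\emptyset$, a random coordinate equals $(-1)^{|S\cap T|}$, which is $-1$ with probability exactly $1/2$, so the error is $1/4$. Hence $R_{1/4}(F)\le 2$, and $R_{\varepsilon}(F)=O(\log(1/\varepsilon))$, independent of $|x|=2^n$.

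The paper's own tools give the same conclusion. $F(x)=2^{-n}\prod_j(1+x_j)=2^{-n}\sum_{S}\prod_{j\in S}x_j$ is just the average of the coordinates of $x$. That is a degree-$1$ polynomial with $L$-norm $1$, so Theorem \ref{TH2} gives $R_{\varepsilon}(F)=O(1)$. Correspondingly $bs(F)=1$: at the all-ones input the sensitive blocks $\{S:|S\cap T|\text{ odd}\}$ pairwise intersect, and each $0$-input has a single sensitive block. Meanwhile $C(F)=n$, which is exactly where $C\le bs^2$ breaks.

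Separately, your direct check of the certificate bound does not work as written. The single coordinate indexed by $S=\{1,\dots,n\}$ contains every $x_j$, so it alone meets your condition for all $n$ single-flip inputs. Nothing forces a certificate to use singleton coordinates. The correct argument is over $\mathbb{F}_2$. If the queried index sets do not span $\mathbb{F}_2^n$, some nonzero $T$ is orthogonal to all of them. The input with $x_j=-1$ exactly for $j\in T$ then agrees with the all-ones input on every queried coordinate, so any $1$-certificate needs at least $n=\log|x|$ coordinates. The certificate bound of Lemma \ref{lem:complexity} is therefore true, but it only shows how far apart $C$ and $R_{\varepsilon}$ are for this partial function. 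It cannot yield the claimed randomized lower bound.
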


It is immediate by Equation \ref{eq:relation_R_C} and Lemma \ref{lem:complexity}.

\section{$L_1$ spectral norm and measurement}\label{sec5}

 The measurement stage of a quantum algorithm can itself be a source of computational complexity \cite{gharibian2019complexity}. Here we introduce a dimension-based measure of measurement ``capa\-city'' (Definition \ref{capac}) and link it to the $L_{1}$ norm of single-query algorithms. Theorem \ref{theo10} shows that unbounded growth of this capacity is necessary for asymptotic quantum advantage, a result that highlights measurement complexity as a fundamental resource.

In this section, we introduce a capacity measure for the measurement stage of a quantum algorithm and relate it to the $L_{1}$ norm of the output of single-query QQM algorithms. As a result, we establish a necessary condition for a single-query QQM algorithm to achieve an asymptotic advantage over classical algorithms.

The complexity of the measurement stage in quantum algorithms has been understudied in the context of quantum advantage. Here, we introduce a capacity measure based on projector dimensions, capturing how much information can be extracted in a single measurement.

\begin{definition}
\label{capac}
     Let $\mathcal{H}$ be a $n$-dimensional Hilbert space and  a CSOP  $P$ on $\mathcal{H}.$ 
     Then we denote the order of $P$ as
    \begin{equation}
    order\left( P \right)=\max_i \left\{ \min \left\{ \dim\left( P_{i} \right),n -\dim\left( P_{i} \right) \right\}\right\},
    \end{equation}
    where $\dim\left( Q \right)$ is the dimension of the subspace of $\mathcal{H}$ projected by a projector $Q$.
\end{definition}
Definition \ref{capac} formulates a capacity measure, which is applied to a CSOP, but is determined by the projector whose number of dimensions is furthest from zero or from the number of dimensions of the space where the CSOP is defined.

We denote the following:
\begin{itemize}
    \item For each $i\in\mathbb{N}$ a finite Hilbert space $\mathcal{H}_i$.
    \item For each $i\in\mathbb{N}$ a quantum query algorithm $\mathcal{A}_i$ whose states belong respectively to $\mathcal{H}_i$ and applies a single query.
    \item The final state of $\mathcal{A}_i$ on input $x$ as $\left | \Psi^{f,i}_{x}  \right> \in \mathcal{H}_i$.
    \item A finite set $Z$.
    \item For each $i$, a CSOP $P^{i}=\left\{ P^{i}_{z}:z\in Z \right\}$.
    \item For each $\mathcal{A}_i$, the probability of obtaining each output $z \in Z$ by measurement on input $x$ is given by: $\pi^{i}_{x}\left( z \right)=\left\| P^{i}_{z}\left|  \Psi^{f,i}_{x} \right\rangle \right\|^{2}.$
    \item Each $\mathcal{A}_i$ computes a function $f_i$ within error $\varepsilon$.
\end{itemize}

Then, we can state the following theorem.

\begin{theorem}\label{theo10}
If  $R_{\varepsilon}\left( f_{i} \right) = \omega\left(1\right)$ then \begin{equation}
    \lim_{i\to \infty } order\left( P^{i}_{z} \right)=\infty
\end{equation} 
 for some $z \in Z$.
\end{theorem}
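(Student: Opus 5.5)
We argue by contraposition: assume that for every $z\in Z$ the sequence $order(P^{i}_{z})$ does not tend to infinity. We will show that $R_{\varepsilon}(f_i)$ is bounded along a subsequence, which contradicts $R_{\varepsilon}(f_i)=\omega(1)$. The bridge is Theorem \ref{TH2} with $t=1$. For it we need a degree-2 polynomial $p_i$ approximating $f_i$ whose norm $L(p_i)$ stays bounded. The natural candidate is the acceptance probability itself, $p_i(x)=\pi^{i}_{x}(z)=\langle\Psi^{f,i}_{x}|P^{i}_{z}|\Psi^{f,i}_{x}\rangle$. Since $Z$ is finite and each $P^i_z$ has bounded order only along a subsequence, we first pass to a single subsequence on which $order(P^{i}_{z})\le r$ for all $z\in Z$ simultaneously. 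This uses finiteness of $Z$ and a diagonal extraction. Because $f_i$ has finitely many outputs, computing it reduces to finitely many Boolean functions of the form ``output $=z$''. Each of these is approximated by $\pi^i_x(z)$, so it suffices to bound $L(\pi^i_x(z))$ for each fixed $z$.

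Next we expand $\pi^i_x(z)$ using Theorem \ref{teo0} with $t=1$. Writing $\widetilde U_0^\dagger$ for the conjugation, the final state before the last unitary is $\sum_{k}(-1)^{x_k}\widetilde P_{0,k}|\Psi\rangle$. Hence
\[
\pi^i_x(z)=\sum_{k,l}(-1)^{x_k+x_l}\langle\Psi|\widetilde P_{0,k}\,Q\,\widetilde P_{0,l}|\Psi\rangle ,
\]
where $Q=\widetilde U_1^\dagger P^i_z\widetilde U_1$. The Fourier coefficient of $\chi_{\{k,l\}}$ is then $2\,\mathrm{Re}\langle\Psi|\widetilde P_{0,k}Q\widetilde P_{0,l}|\Psi\rangle$, so
\[
L(\pi^i_x(z))\le\sum_{k,l}\bigl|\langle\psi_k|Q|\psi_l\rangle\bigr|,\qquad |\psi_k\rangle=\widetilde P_{0,k}|\Psi\rangle .
\]

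The key step is to bound this double sum in terms of $\dim(Q)$. We use $Q$ when $\dim P^i_z\le r$, and $I-Q$ when the codimension is at most $r$. In the latter case $\pi=1-\langle\cdot|(I-Q)|\cdot\rangle$, and the constant only adds $1$ to the norm by Lemma \ref{lem:normL1}(ii). So we may assume $\mathrm{rank}\,Q\le r$ and write $Q=\sum_{j=1}^{r}|\phi_j\rangle\langle\phi_j|$. Then
\[
\sum_{k,l}|\langle\psi_k|Q|\psi_l\rangle|\le\sum_{j}\Bigl(\sum_k|\langle\psi_k|\phi_j\rangle|\Bigr)^2 .
\]
By Cauchy–Schwarz over $k$, each inner sum is at most $\sqrt{\sum_k\|\psi_k\|^2}$ times $\sqrt{\sum_k\|\widetilde P_{0,k}\phi_j\|^2}$. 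Both of these factors equal $1$, because the $\widetilde P_{0,k}$ are orthogonal and complete. Note that we should apply Cauchy–Schwarz to $\langle\psi_k|\phi_j\rangle=\langle\psi_k|\widetilde P_{0,k}\phi_j\rangle$, not to the raw vectors; this is exactly where orthogonality of the CSOP enters. The result is $L(\pi^i_x(z))\le r+1$, which is uniform in $i$ and in the input length.

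Finally, Theorem \ref{TH2} gives $R_\varepsilon(f_i)=O((r+1)^2)=O(1)$ along the subsequence, a contradiction. The main obstacle is making the bound independent of the input length $n_i$. A naive count would give $O(n_i^2)$ terms, and the bound above avoids this only through the orthogonality-based Cauchy–Schwarz step. A secondary subtlety is handling several outputs $z$ while keeping the error below $1/2$ in Theorem \ref{TH2}. For this we either restrict to Boolean $f_i$, or compute $f_i$ classically by estimating each of the finitely many $\pi_x(z)$ with $O(1)$ extra cost.
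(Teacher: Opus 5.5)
\textbf{Gap: the opening subsequence extraction is invalid.}

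Your norm estimate is the paper's own argument. You expand $\pi^i_x(z)$ via Theorem~\ref{teo0}, bound $L(\pi^i_x(z))$ by $\sum_j\bigl(\sum_k|\langle\psi_k|\phi_j\rangle|\bigr)^2$, and use orthogonality of the CSOP plus Cauchy--Schwarz. This is exactly (\ref{cot2})--(\ref{eq24}), since with $|\phi_j\rangle=\widetilde U_1^{\dagger}|b\rangle$ one has $\langle\phi_j|\psi_k\rangle=\alpha^{i,z}_{k,b}$. Your explicit passage to $I-Q$ in the small-codimension case fills in a step the paper leaves implicit.

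The gap is the first reduction. Negating the conclusion gives, for each $z$ separately, a subsequence on which $order(P^i_z)$ is bounded. Finiteness of $Z$ plus diagonal extraction does not produce a common one: along the subsequence chosen for $z_1$, $order(P^i_{z_2})$ may tend to infinity, so no further extraction helps.

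For $|Z|\ge 3$ this cannot be repaired, because the fixed-$z$ reading of the statement is false. Here is a counterexample:
\begin{enumerate}
\item Take any two-output single-query family with $R_\varepsilon(f_i)=\omega(1)$ (e.g.\ Forrelation).
\item Enlarge the workspace register by one value on which all the unitaries act as the identity, so the state never leaves the original block.
\item Give a third output the rank-one projector onto one of the new basis vectors, and absorb the remaining new vectors into another output's projector. The output probabilities are unchanged.
\item Cyclically permute the three labels according to $i \bmod 3$.
\end{enumerate}
Every $z$ then has $order(P^i_z)=1$ for infinitely many $i$, while $R_\varepsilon(f_i)=\omega(1)$.

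\textbf{What your argument does prove.} It shows that $\max_{z\in Z} order(P^i_z)$ tends to infinity, i.e.\ the order of the CSOP $P^i$ in the sense of Definition~\ref{capac} diverges. This is also all the paper's proof yields, since its contrapositive hypothesis is precisely a common subsequence $c_i$ with $order(P^{c_i}_z)<K$ for all $z$. Negating this CSOP-level statement hands you the common subsequence directly, and the rest of your proof goes through unchanged. For $|Z|=2$ it is the theorem as written, because $\dim P^i_0+\dim P^i_1=\dim\mathcal{H}_i$ forces $order(P^i_0)=order(P^i_1)$.

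\textbf{A smaller point.} Theorem~\ref{TH2} gives classical algorithms for the Boolean predicates $[f_i(x)=z]$, not estimates of $\pi^i_x(z)$. Each predicate is $\varepsilon$-approximated by $\pi^i_x(z)$, because $\pi^i_x(z)\le 1-\pi^i_x(f_i(x))\le\varepsilon$ when $z\ne f_i(x)$. Amplify each predicate to error $\varepsilon/|Z|$ and output the accepted $z$; this costs only a constant factor.
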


\begin{proof}
    For each QQM algorithm $\mathcal{A}_i$, we denote i) $U^{i}_{j}$ as the unitary operator after the $j$-th query and ii) $P^{i}_{z}$ as the projector for the output $z$, iii) $\widetilde{U}^{i}_{n}=U^{i}_{n}U^{i}_{n-1}\ldots U^{i}_{0}$ and ${\widetilde{P}^{i}_{j,k}=\left(\widetilde{U}^{\,i}_{\,j}\right)^{\dagger}\bar{P}_{k}\widetilde{U}^{i}_{j}}$ (with $\bar{P}_{k}$ as defined in Section \ref{sec2}).
    For   each $i$ and $z\in Z$, denote $\mathcal{B}^{i}_{z}$ as a base for the subspace projected by $P^{i}_{z}$, then: \begin{equation}
        P^{i}_{z}=\sum_{b \in \mathcal{B}^{i}_{z}}\left |b\right>\left< b\right |.
    \end{equation}
    
     For each $i$, $z$, $x$, $k\in \left[n \right]$ and $b \in \mathcal{B}^{i}_{z}$, there are coefficients $\alpha^{i,z}_{k,b}$ such that 
    \begin{equation}\label{eq17}
    P^{i}_{z}U^{i}_{1}\bar{P}_{k}U^{i}_{0}\left | \Psi \right>=\sum_{b \in \mathcal{B}^{i}_{z}}\alpha^{i,z}_{k,b}\left |b\right>.
    \end{equation} 
By Theorem \ref{teo0} and Equation \ref{eq17} we have
    
\begin{equation}
    P^{i}_{z}\left | \Psi^{f,i}_{x} \right>=\sum_{k\in \left[n \right]}(-1)^{x_{k}}\left( \sum_{b \in \mathcal{B}^{i}_{z}}\alpha^{i,z}_{k,b}\left |b\right> \right),
\end{equation} which implies

\begin{equation}
    \left\| P^{i}_{z}\left | \Psi^{f,i}_{x} \right> \right\|^{2}=\sum_{k\in \left[n \right]}\sum_{h\in \left[n \right]}\left(\sum_{b \in \mathcal{B}^{i}_{z}}(-1)^{\left( x_{k}+x_{h} \right)} \alpha^{i,z}_{k,b}\alpha^{i,z}_{h,b} \right)
\end{equation}

    We define an equivalence relation  $\mathcal{R}$ in $\mathcal{S}=\left[n \right]\times \left[n \right]$, such that $\left( k,h \right)$ and $\left( k',h' \right)$ are equivalent if $(-1)^{\left( x_{k}+x_{h} \right)}=(-1)^{\left( x_{k'}+x_{h'} \right)}$ for all $x$, and denote the set of equivalence classes $\left[ \left( k,h \right) \right]$ as $\mathcal{S}/\mathcal{R}$. Therefore:

\begin{equation}
    L\left( \pi^{i}_{x}\left( z \right) \right)=\sum_{\left[ \left( k,h \right) \right]\in \mathcal{S}/\mathcal{R}}\left| \sum_{\left( k',h' \right)\in \left[ \left( k,h \right) \right]}\sum_{b \in \mathcal{B}^{i}_{z}} \alpha^{i,z}_{k',b}\alpha^{i,z}_{h',b}  \right|,
\end{equation} which implies

\begin{equation}
    L\left( \pi^{i}_{x}\left( z \right) \right) \le \sum_{b \in \mathcal{B}^{i}_{z}}\sum_{\left[ \left( k,h \right) \right]\in \mathcal{S}/\mathcal{R}}\sum_{\left( k',h' \right)\in \left[ \left( k,h \right) \right]}\left|  \alpha^{i,z}_{k',b}\alpha^{i,z}_{h',b}  \right|=\sum_{b \in \mathcal{B}^{i}_{z}}\left(  \sum_{k\in \left[n \right]}\left| \alpha^{i,z}_{k,b} \right|  \right)^{2}.
\end{equation}    
Let $\left |\alpha\right>, \left |\beta\right> \in \mathcal{H}$, and a CSOP $P=\left\{ P_{k} \right\}$ of order 1 in a Hilbert space $\mathcal{H}$; notice that if $\left\| \left | \beta\right> \right\|=1$ then 

\begin{equation}
\label{cot2}
\sum_{k} \left| \left< \beta\right |P_{k}\left |\alpha\right> \right|\le \left\| \left | \alpha\right> \right\|.  
\end{equation}
    From Equation \ref{eq17} we have
\begin{equation}
\label{eq23}
\left< b\right |U^{i}_{1}\bar{P}_{k}U^{i}_{0}\left | \Psi \right>=\alpha^{i,z}_{k,b},
\end{equation} and applying Equation \ref{cot2} on Equation \ref{eq23} we have that

\begin{equation}\label{eq24}
\sum_{k\in \left[n \right]} \left| \alpha^{i,z}_{k,b} \right|\le 1.
\end{equation}    
By Theorem \ref{TH2}, Equation \ref{eq23} and Equation \ref{eq24}; if there is some $K$ and an integer increasing infinite sequence $c_i$ satisfying 

\begin{equation}
    order\left( P^{c_i}_{z} \right)<K
\end{equation} for all $z$, then 

\begin{equation} \label{upb}
     R_{\varepsilon}\left( f_{c_i} \right)< K'
\end{equation}
for some $K'$.

Since $Z$ is finite and the probability of each $z \in Z$ can be efficiently computed by a classical algorithm according to Equation~\ref{upb}, it follows that a classical algorithm can reproduce the full output distribution within a bounded number of queries.

\end{proof}

We emphasize that Theorem \ref{theo10} is a result that applies only to single-query algorithms with a limited number of possible outputs. This result implies that the appropriate number of dimensions of the projectors for each output is a necessary condition for single-query quantum algorithms to achieve an exponential advantage over classical algorithms.

\section{Conclusion}\label{sec6}

This paper presents a computationally tractable framework for problem search where single-query quantum algorithms possess an advantage over classical decision trees. This is because single-query quantum algorithms have a straightforward relation to symmetric matrices or weighted graphs. Therefore, the problem can be attacked by combinatorial and matrix analysis techniques. The proposal is based on maximizing the $L_{1}$ spectral norm of the output of a multilinear second degree polynomial as a weighted graph, since a high $L_{1}$ spectral norm is a necessary condition for such an output to be costly to simulate with classical decision trees. Accordingly, the proposal can be formulated as a maximization problem with constraints, as well as an equivalent minimization problem. However, non-optimal solutions that tend to increase the spectral norm $L_{1}$ are also interesting. As future work, we propose developing a simulation technique to transform a WDG into a QQM algorithm, such that if the WDG error tends to zero when computing a function, then the QQM algorithm error also tends to zero.

The strategies presented to maximize the $L_1$ norm of second-degree multilinear polynomials depend on expressions with Kronecker products between the matrix representations of two polynomials. As we saw when analyzing the maxima and minima of polynomials with maximized $L_1$ norm, we identified Boolean functions where single-query quantum algorithms can outperform classical ones. An open question is what strategies we can formulate to maximize the $L_1$ norm that involve other types of composition or that do not even depend on any composition?

Finally, we present a necessary condition for a single-query quantum algorithm with a finite number of possible outputs to compute functions with an asymptotic advantage over classical query algorithms. This condition is based on a measure that depends on the number of dimensions of the output projections, which presents the complexity of the measurement stage as a computational resource for QQM algorithms and represents a guideline in the design of quantum algorithms. An obvious question regarding this result is the impact of measurement complexity for a sequence of algorithms whose number of queries is bounded by a constant greater than 1 and, more generally, by a function that depends on the size of the input.

 While the weighted graph formulation offers a novel approach for analyzing single-query quantum advantage, several directions warrant further investigation. First, the approach assumes bounded-error approximation; its extension to zero-error or exact computation remains open. Second, the framework currently addresses only degree-$2$ polynomials; generalizing to higher-degree polynomials may require hypergraph representations. Finally, the connection between WDG optimization and implementable quantum circuits is not yet constructive, a crucial gap for practical applications.

\bmhead{Acknowledgements}

This work was supported by CONACYT, Paraguay, under Grant PINV01-397 and by the Iberoamerican Program for Science and Technology for Development (CYTED) through RIPAISC - Iberoamerican Network for the Advancement of Quantum Software Engineering (525RT0174). F.L.M. acknowledges financial support of CNPq process No. 407296/2021-2, CNPq process No. 306049/2025-2, and of the National Institute of Science and Technology for Applied Quantum Computing through CNPq process No. 408884/2024-0.

\bibliography{sn-bibliography}

\end{document}